\documentclass[sn-mathphys,Numbered]{sn-jnl}


\usepackage{graphicx}%
\usepackage{multirow}%
\usepackage{amsmath,amssymb,amsfonts}%
\usepackage{amsthm}%
\usepackage{mathrsfs}%
\usepackage[title]{appendix}%
\usepackage{xcolor}%
\usepackage{textcomp}%
\usepackage{manyfoot}%
\usepackage{booktabs}%
\usepackage{algpseudocode}%
\usepackage{listings}%
\usepackage{caption} 
\usepackage{subcaption}
\usepackage{amsmath,amsfonts}
\usepackage{amsthm,amssymb}
\usepackage{float}




\theoremstyle{thmstyleone}%
\newtheorem{theorem}{Theorem}
\newtheorem{proposition}[theorem]{Proposition}%

\theoremstyle{thmstyletwo}%
\newtheorem{remark}{Remark}%

\theoremstyle{thmstylethree}%

\raggedbottom

\begin{document}

\title[]{Selective Forgetting in Option Calibration: An Operator-Theoretic Gauss--Newton Framework}


\author*[1 ]{\fnm{Ahmet Umur} \sur{\"Ozsoy}}\email{umurozsoy@gmail.com}

\affil[1]{\orgdiv{Department of Industrial Engineering}, \orgname{Istanbul Okan  University}, \orgaddress{  \city{Istanbul}, \postcode{34959}, \country{Turkey}}}


\abstract{Calibration of option pricing models is routinely repeated as markets evolve, yet modern systems lack an operator for removing data from a calibrated model without full retraining. When quotes become stale, corrupted, or subject to deletion requirements, existing calibration pipelines must rebuild the entire nonlinear least-squares problem, even if only a small subset of data must be excluded. In this work, we introduce a principled framework for selective forgetting (machine unlearning) in parametric option calibration. We provide stability guarantees, perturbation bounds, and show that the proposed operators satisfy local exactness under standard regularity assumptions. }

\keywords{selective forgetting, option calibration, Gauss--Newton methods, 
	sufficient statistics, numerical unlearning, Heston model}



\maketitle

\section{Introduction}


Modern financial models are not static; they are  recalibrated as market conditions change.
Therefore calibrating parametric asset-pricing models to market data has always been an ongoing interest for both practitioners and academics in the field of mathematical finance.
Risk management systems along with trading desks rely heavily on the repeated solutions of inverse problems aimed at calibrating  and  adjusting parameters $\theta$ so that the model-based prices $m(x;\theta)$ reproduce  observed quotes  to some extent of accuracy.
Option-implied volatility surfaces evolve minute by minute, and model parameters such as
mean reversion, volatility of volatility, or correlation etc. are adapted to new market information.
Formally, calibration seeks parameters $\theta$ minimizing a discrepancy between model generated prices
$m(x;\theta)$ and observed quotes, typically through nonlinear least squares or
maximum-likelihood estimation. 
This sort of inverse problem is present at models such as Heston and SABR
to structural credit, interest rate, and hybrid models and  lies at almost all the operational core of
risk engines and trading platforms.

While long-standing research have refined the \emph{estimating} or \emph{learning} side of calibration of financial derivatives, options in our case, no interest has been shown to its conceptual dual, \emph{unlearning}.
When certain data becomes corrupted, obsolete, or subject to possible deletion requests,
the model should exclude its influence without a full recalibration from scratch.
Therefore deletion and data retraction could become necessities as quotes expire, bad ticks or outlier surfaces are purged, and perhaps regulatory obligations (e.g. GDPR, audit requests, or data licensing constraints) require that specific records be removed from already-calibrated models.
The question is immediate yet unsolved, therefore the question becomes how a calibrated model retract the influence of certain data without a full recalibration from scratch.
The removal (i.e., deletion) of certain subsets of available data from an already trained models has emerged recently in machine learning as \emph{Machine Unlearning} or \emph{Selective Forgetting}.
The essence of such approaches casts a simple question whether or not the exclusion of a subset of data requires retraining on the remaining data,~\cite{ginart2019making, bourtoule2021machine, sekhari2021remember,qu2024learn, guo2019certified},
and has become central to privacy-aware and data-efficient algorithm design.

In financial modeling, however, unlearning has not been formalized.
To address this, we introduce a principled framework for \emph{machine unlearning in option calibration}.
Our goal is to update calibrated parameters  as if certain quotes had never been observed,
without re-accessing or reprocessing the entire dataset.
We cast calibration as a nonlinear least-squares problem solved by Gauss--Newton iterations
and show that the normal-equation structure naturally supports machine unlearning.

The term \emph{machine unlearning} has traditionally referred to privacy-motivated deletion in which the goal is to modify a trained model so that it becomes indistinguishable from one that
was never exposed to the deleted data~\cite{ginart2019making}. 
That definition requires a probabilistic or differential-privacy guarantee, an epistemic statement that an observer cannot tell whether the forgotten data ever influenced the model.

Our work adopts the unlearning viewpoint in a different, numerical sense that is particularly natural for calibration. 
We are not concerned with information-theoretic indistinguishability, but with the \emph{computational removal of numerical influence}.
Given a parametric model $m(x;\theta)$ calibrated on market data $D$, and a subset $F\subset D$ of quotes to be discarded (e.g., due to stale prices or data errors), our
objective is to obtain parameters
\[
\theta' \;=\; \arg\min_\theta J(\theta;D\setminus F),
\]
\emph{without reprocessing all of $D$}.  
In this setting, unlearning means reproducing the same parameter update that full retraining on $D\setminus F$ would yield, up to machine precision.
This provides a strict advantage as retraining on large option datasets could be expensive, especially for models with higher complexity.
Hence the \emph{forgetting} is not about privacy or randomness in our point of view, but about \emph{efficiently erasing the numerical footprint} of specific data in parameter space. We refer to~\cite{zhang2023review} and~\cite{nguyen2025survey} for recent and well articulated examples of review of \emph{Machine Unlearning} as the literature is quite vast to provide deep holistic view.


Even though we touch upon the fact that no interest has been shown to the removal of data that compels a recalibration, there are studies that are directionally \emph{forward in time}  in which the models are refit periodically conditional on the accumulation of \emph{new} data, for instance ~\cite{date2011linear},~\cite{broto2004estimation},~\cite{bakshi1997empirical},~\cite{broadie2007model}.  
Such studies perform incremental steps on new samples, by the very nature of it.
By construction, however, they do not address the inverse problem of properly removing the informational contribution of a specific subset of observations while reproducing the solution that would have been obtained had those observations never been used.

Calibration could  be repeated thousands of times as market conditions evolve, often under strict latency and consistency constraints.
This increases our interest in \emph{Machine Unlearning} as a strong reason of removal bulk amount of data might be on cleaning corrupted data.
Even in stable markets, bad ticks, stale or misrecorded quotes could exist and distort calibration.
Therefore unlearning such contaminated shards of data could improve calibration precision without retraining the model from scratch. 
Consider that a pricing engine or feed producing several days of quotes with a decimal-shift bug, instead of full retrain; unlearning those days might realign the parameters with the clean market.
Therefore, especially inspired by the \textsc{SISA} (Sharded Isolated Sliced Aggregation) paradigm from~\cite{bourtoule2021machine}, we suggest two unlearning operators by showing that in nonlinear calibration, the Gauss--Newton equations can be reorganized into algebraically additive terms that admit exact deletion operators as the apparent simplicity hides a structural insight.

This work introduces a principled framework for \emph{selective forgetting} (machine unlearning) in parametric option calibration, formulated under the standard Gauss--Newton (GN) least-squares setting used in Heston-type models. 
Given an initially calibrated parameter vector $\theta$ fitted on a dataset of option quotes $D = \{(x_i,y_i)\}$, the goal is to efficiently update $\theta$ to the parameter that would result from retraining on the retained subset $D \setminus F$, without accessing the full dataset again.
Our contributions are both algorithmic and theoretical.
%
We first design a shard-aware decomposition of the GN normal equations and then we prove that this system coincides exactly with the one obtained by retraining on $D\setminus F$ at a fixed linearization. We term this approach the sharded recompute as it enables machine-precision unlearning with partial data access and provides a scalable, shard-local recalibration architecture for option models.
We then develop a data-free \emph{refactor} operator that realizes exact forgetting without reopening any raw quotes making this the first exact forgetting operator for nonlinear least-squares calibration models, to our knowledge.
The approach achieves retraining-level accuracy while remaining completely data-free once the sufficient statistics $(u_i,\psi_i)$ are cached.

We further  show in   synthetic   option datasets that our framework achieves near-zero degradation in calibration accuracy compared to full retraining, while reducing computational cost by an order of magnitude or more.  
The proposed framework supports operational scenarios in which data must be removed due to regulatory, contractual, or quality-control reasons, providing a principled and efficient alternative to discarding and recalibrating the entire dataset.
By applying selective forgetting to targeted subsets (e.g., quotes from specific dates or sources), we measure their influence on calibrated parameters and pricing accuracy, enabling a form of leave-one-shard-out sensitivity analysis for option pricing models.
Overall, this study bridges the emerging field of machine unlearning with the long-standing problem of derivative model calibration, introducing both a novel theoretical framework and an immediately applicable methodology for real-world financial modeling.
Beyond computational efficiency, the proposed framework positions unlearning  as a fundamental complement to calibration in model management.
In large-scale pricing and risk systems, models must not only learn from new data  but also \emph{forget} obsolete or restricted information.
The machine unlearning operators developed here establish an analytical bridge  between machine unlearning and quantitative finance, enabling the first  operator-theoretic treatment of data deletion in stochastic-volatility model calibration.

\section{Formulation of the unlearning problem}

The unlearning procedure proposed in this study operates purely on the normal equations of the Gauss--Newton method. 
As such, it does not modify the underlying option pricing model, the risk-neutral pricing map, or any no-arbitrage structure inherent to the parametric family.
While the numerical experiments use European calls under the Heston model for analytical clarity,
the proposed forgetting framework extends to any differentiable pricing map,
including exotic or path-dependent contracts evaluated by Monte Carlo or adjoint methods.
The machine unlearning calibration framework developed here does respect all classical no-arbitrage and mathematical finance principles (i.e., monotonicity, convexity, existence of the equivalent martingale measures etc.) to the same extent as the underlying model  calibrated.
In this section, we first discuss preliminaries and then mathematically develop the machine unlearning operators.

\subsection{Preliminaries}

We consider the problem of calibration of the Heston model to European call option prices. 
Under the risk--neutral measure, the Heston dynamics for the asset price $S_t$ and its variance $v_t$ are
\[
\mathrm{d}S_t = rS_t\,\mathrm{d}t + \sqrt{v_t}S_t\,\mathrm{d}W_t^S, \qquad
\mathrm{d}v_t = \kappa(\theta_v - v_t)\,\mathrm{d}t + \sigma_v\sqrt{v_t}\,\mathrm{d}W_t^v,
\]
with $\mathrm{d}\langle W^S,W^v\rangle_t = \rho\,\mathrm{d}t$.
The parameter vector is
$\theta = (\kappa,\theta_v,\sigma_v,\rho,v_0)$.
Call option prices $m(x;\theta)$ are given in semi-closed form via the characteristic function of $\log S_T$~\cite{heston1993closed}.
European call prices under the Heston model admit the semi-analytical representation
\begin{equation}
	\label{eq:heston-price}
	C(S_0,K,T;\theta)
	= S_0 P_1 - K e^{-rT} P_2,
\end{equation}
where the risk--neutral probabilities, \(P_j\), are given by
\begin{equation}
	\label{eq:heston-pj}
	P_j = \tfrac{1}{2} + \tfrac{1}{\pi}
	\int_0^{\infty}
	\Re\!\left(
	\frac{e^{-i u \ln K}\,\phi_j(u)}{i u}
	\right)\mathrm{d}u,
	\qquad j\in\{1,2\}.
\end{equation}
%
%
The characteristic function $\phi_j(u)$ follows the standard form of~\cite{heston1993closed}, depending on parameters $\theta = (\kappa,\theta_v,\sigma_v,\rho,v_0)$.
We evaluate \eqref{eq:heston-pj} numerically via Simpson integration with an upper bound $U_{\max}$ and $N$ sub-intervals, which ensures differentiability of $C(\cdot;\theta)$ with respect to each parameter which ensures that the Jacobian $J_i(\theta)=\nabla_\theta m(x_i;\theta)$ exists for each quote given that $x_i$ $m(x;\theta)$ the parametric pricing map (e.g., Heston) with $\theta \in \Theta \subset \mathbb{R}^p$ with $D=\{(x_i,y_i)\}_{i=1}^N$ denoting the dataset of option quotes (features $x_i$ and responses $y_i$).
Further, let $x = (S,K,T,r)$ denote the market features of a European call option quote,
and let $\theta = (\kappa,\theta_v,\sigma_v,\rho,v_0)$ be the Heston parameter vector.
We define the parametric pricing map
\[
m(x;\theta)
= S\,P_1(S,K,T,r;\theta)
- K e^{-rT} P_2(S,K,T,r;\theta),
\]
where $P_1$ and $P_2$ are the risk--neutral probabilities given by the
Fourier--Laplace integrals of \eqref{eq:heston-pj}.
The calibration problem then seeks
\[
\theta^\star = \arg\min_{\theta} J(\theta;D),
\qquad
J(\theta;D) = \sum_{i\in D} \big(y_i - m(x_i;\theta)\big)^2.
\]
With $r_i(\theta)=y_i-m(x_i;\theta)$, linearizing each residual around a reference $\theta^{\mathrm{ref}}$ gives $r_i(\theta^{\mathrm{ref}}+\Delta\theta) 
\approx r_i(\theta^{\mathrm{ref}}) - J_i(\theta^{\mathrm{ref}})\Delta\theta$, where $J_i(\theta)=\nabla_\theta m(x_i;\theta)$ is the sensitivity (Jacobian) of the model output with respect to parameters.
Substituting into $J(\theta)$ and minimizing the quadratic approximation yields
the \emph{Gauss--Newton normal equations}
\[
H(\theta^{\mathrm{ref}})\,\Delta\theta = G(\theta^{\mathrm{ref}}),
\qquad
H=\sum_i J_i^\top J_i,\quad
G=\sum_i J_i^\top r_i.
\]
Solving for $\Delta\theta$ provides the parameter correction that minimizes the
local linearized loss:
$\theta'=\theta^{\mathrm{ref}}+\Delta\theta$.
Because $H$ and $G$ are additive across data points, any subset of quotes can be
removed or updated by simple algebraic subtraction of their local contributions.
This additive structure underpins the exactness of the proposed forgetting
operator.
Therefore, we build upon this simplistic yet structural insightful  observation of additivity in designing the unlearning operators.

\subsection{Sharded recompute operator}
\label{subsec:sharded}

The idea of dividing a dataset into shards for efficient unlearning has appeared in the
machine-learning literature, notably in the ``SISA'' framework of \cite{bourtoule2021machine}, which trains isolated submodels that can be retrained independently upon deletion requests.
Our sharded design  which we now present is mathematically different; rather than training independent submodels, we partition the Gauss--Newton normal equations themselves into additive
shard contributions $(H_k,G_k)$, allowing exact recomputation of the global system after a shard-level deletion.

Let $x_i$ be features (e.g., moneyness, maturity etc) and $y_i$ be the observed price (or implied volatility); together constituting the set option quotes $(x_i, y_i)$ with  $i \in D$ (finite index set).
We define $m(x; \theta)$ as the parametric pricing map (e.g., Heston) with parameters $\theta \in \Theta \subset \mathbb{R}^p$ and Loss as $\ell(y, \hat{y})$, typically squared error on prices.
And calibration minimizes the empirical loss of the form
\begin{equation}
J(\theta; D) = \sum_{i \in D} \ell(y_i, m(x_i; \theta)).
\end{equation}
%
%
%
%
At a reference $\theta$, We use Gauss--Newton to solve 
\begin{equation}
\quad H(\theta) \Delta \theta = g(\theta),
\end{equation}
where $g(\theta) = \sum_i J_i(\theta)^\top r_i(\theta)$, $H(\theta) = \sum_i J_i(\theta)^\top J_i(\theta)$, $r_i(\theta) = y_i - m(x_i; \theta)$, $J_i(\theta) = \nabla_\theta m(x_i; \theta)$.
Given a trained model on $D$, and a subset $F \subset D$ to "forget", update $\theta$ so the new parameter equals (or closely matches) the parameter you would obtain by retraining on $D \setminus F$, without sweeping the entire $D$ again.
Sharded recomputation first partitions the data set into $K$ shards $D = \bigcup_{k=1}^{K} D_k$.
Sharding could be by time (e.g., month).
We remark an important suggestion on the shard sizes. 
Given the assumption that some parts of data will be unlearned, keeping shards moderately sized so that removing a subset touches few shards .
For any reference $\theta$, we define per-shard Gauss--Newton aggregates:
\begin{equation}
G_k(\theta) = \sum_{i \in D_k} J_i(\theta)^\top r_i(\theta), \quad H_k(\theta) = \sum_{i \in D_k} J_i(\theta)^\top J_i(\theta).
\end{equation}
Global aggregates are sums across shards:
\begin{equation}
G(\theta) = \sum_{k=1}^{K} G_k(\theta), \quad H(\theta) = \sum_{k=1}^{K} H_k(\theta).
\end{equation}
The methodology with reference $\theta^{\text{ref}}$. 
First compute  $G_{k}(\theta^{\text{ref}})$, $H_{k}(\theta^{\text{ref}})$ for each shard $k$, sum to $G$, $H$; solve $H \Delta \theta = G$ and finally update $\theta \leftarrow \theta^{\text{ref}} + \Delta \theta$.
One can optionally relinearize once or twice (update reference and re-compute shard stats).
This, rather, describes the baseline calibration procedure when no data has been forgotten.
For $F \subset D$, unlearning phase includes identifying the set of affected shards $\mathcal{K}(F) = \{k : D_{k} \cap F \neq \emptyset\}$ first, then recomputing only those shards such that 
$k \in \mathcal{K}(F)$ on $D_{k} \setminus F$ to obtain $G_{k}^{\prime}$, $H_{k}^{\prime}$.
We note that unaffected shards keep their statistics $G_{k}$, $H_{k}$ while new global stats become
\begin{equation}\label{eq:G_prime_first}
G^{\prime} = \sum_{k \notin \mathcal{K}(F)} G_{k} + \sum_{k \in \mathcal{K}(F)} G_{k}^{\prime},
\end{equation}
\begin{equation}\label{eq:H_prime_first}
H^{\prime} = \sum_{k \notin \mathcal{K}(F)} H_{k} + \sum_{k \in \mathcal{K}(F)} H_{k}^{\prime},
\end{equation}
with \eqref{eq:G_prime_first} and \eqref{eq:H_prime_first}, we solve for $H^{\prime} \Delta \theta^{\prime} = G^{\prime}$ and update accordingly.
Although the discussion above is straightforward, presenting the intuition of sharding is timely.
The idea of grouping the option (panel) data into subgroups before calibration is already well known in the literature; for instance we refer to \cite{dumas1998implied}, \cite{ulrich2023implied}, \cite{homescu2011implied} and \cite{friedman2014some}.
As suggested, bucketing option data is already a common practice and we repurpose it in our analogy to share the same sharding narrative presented in \cite{bourtoule2021machine}.
On the broader terms, there are possible alternatives in sharding which will ultimately depend on the unlearning requests or necessities.
In our study, we simply do it by time as it is simple and stable.
This approach is particularly useful if the quotes arrive over days and forgetting targets a \emph{date range}.
Another possibility is through product structure, for instance one shard including ATM with fewer than 30 days to expiration and another for OTMs with fewer than 60 days to expiration.
Possible scenario might include forgetting particular short-dated options from the dataset or removing bad surface are around 1M tenor.
This could provide better locality than time-sharding and perhaps the one that could contribute to model validation procedures given sharding targets after specific questions raised.
Finally we could suggest a hybrid version of what we discussed so far yet we leave such possibilities for further studies.

To ensure validity, we have standard stability and regularity assumptions. 
We assume that $m(x; \theta)$ is twice continuously differentiable in $\theta$ on $\Theta$ so that the model's pricing map is smooth enough.
Further, the Jacobian $J(\theta)$ should exist and local Taylor expansion, $r(\theta + \Delta \theta) \approx r(\theta) + J \Delta \theta$, does not lead to instability in residuals.
Secondly we assume that per-shard sums $H_k(\theta^{\text{ref}})$ are positive semidefinite so that local curvature remains nonnegative; i.e. shards contribute non-negative information.
And lastly we assume that global $H(\theta^{\text{ref}})$ is positive definite with minimal eigenvalue $\lambda_{\min} > 0$ so that   local strong convexity is established at $\theta^{ref}$ and the global normal equation has a unique solution.

\begin{proposition}{(Shard-level exactness at a fixed linearization)}\label{prop:1}
Fix the reference $\theta^{\text{ref}}$. Consider the Gauss--Newton normal equations at that reference. If we recompute exactly $G'_k$, $H'_k$ for all affected shards on $D_k \setminus F$ and keep $(G_k, H_k)$ for unaffected shards, then the global system
$$\left(\sum_{k \notin \mathcal{K}(F)} H_k + \sum_{k \in \mathcal{K}(F)} H'_k\right) \Delta \theta' = \sum_{k \notin \mathcal{K}(F)} G_k + \sum_{k \in \mathcal{K}(F)} G'_k$$
is identical to the Gauss--Newton system built by running over the full retained set $D \setminus F$ at $\theta^{\text{ref}}$. Consequently, the update $\theta' = \theta^{\text{ref}} + \Delta \theta'$ matches full retraining under the same linearization.
\end{proposition}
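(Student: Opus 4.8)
The plan is to reduce the statement to a single algebraic identity: that the assembled coefficient matrix and right-hand side agree \emph{term by term} with those obtained by a full pass over $D \setminus F$, with no approximation, precisely because every Jacobian and residual is frozen at the common reference $\theta^{\text{ref}}$. Since the Gauss--Newton aggregates $H = \sum_i J_i^\top J_i$ and $G = \sum_i J_i^\top r_i$ are finite sums of per-quote contributions, the whole argument is an exercise in splitting a sum over a disjoint index partition and then identifying which partial sums are recomputed and which are cached.

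First I would record that the shards form a disjoint partition $D = \bigcup_{k=1}^K D_k$, so that the retained index set decomposes disjointly as $D \setminus F = \bigcup_{k=1}^K (D_k \setminus F)$. Second, by finite additivity of the defining sums over this disjoint union, the full-retain operator splits as $\sum_{i \in D \setminus F} J_i^\top J_i = \sum_{k=1}^K \sum_{i \in D_k \setminus F} J_i^\top J_i$, and identically for $G$. Third, I would treat the two shard types separately: for an unaffected shard $k \notin \mathcal{K}(F)$ we have $D_k \cap F = \emptyset$, hence $D_k \setminus F = D_k$, so its restricted partial sum is \emph{exactly} the cached pair $(H_k, G_k)$; for an affected shard $k \in \mathcal{K}(F)$ the restricted partial sum is, by definition, the recomputed pair $(H_k', G_k')$. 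Substituting these identifications back into the split above yields precisely the assembled operator $\sum_{k \notin \mathcal{K}(F)} H_k + \sum_{k \in \mathcal{K}(F)} H_k'$ and its $G$-counterpart, establishing that the two linear systems are literally the same system.

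The step that actually carries the content — and the one I would flag as the crux rather than a routine manipulation — is the insistence on a \emph{fixed} linearization. The additivity identity holds only because each $J_i = J_i(\theta^{\text{ref}})$ and $r_i = r_i(\theta^{\text{ref}})$ is evaluated at the \emph{same} point in both computations; were one to relinearize the affected shards about an updated reference, their Jacobians would be taken at a different argument and the per-quote terms from different shards would no longer sum to the full-retain operator. I would therefore state the equality as an exact (machine-precision) identity at $\theta^{\text{ref}}$ and keep any relinearization strictly outside the scope of the claim. Finally, to pass from equal systems to equal \emph{updates}, I would invoke the standing regularity assumption that $H$ is positive definite with $\lambda_{\min} > 0$, together with the mild observation that deleting a small subset $F$ preserves invertibility of $H'$; uniqueness of the Gauss--Newton solution then forces $\Delta\theta'$, and hence $\theta' = \theta^{\text{ref}} + \Delta\theta'$, to coincide exactly with the full-retraining update.
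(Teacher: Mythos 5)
Your proof is correct and takes essentially the same route as the paper, which disposes of the proposition in one line (``both sides are linear sums over $i \in D \setminus F$; sharding is just a partition''); your write-up merely makes the partition/additivity argument and the cached-versus-recomputed shard identification explicit. Your closing appeal to positive definiteness of the curvature matrix to pass from identical systems to identical updates is consistent with the paper's standing regularity assumptions and adds a small but harmless amount of rigor beyond what the paper records.
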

Proof is trivial as both sides are linear sums over $i \in D \setminus F$; sharding is just a partition.
The reason we put forward Proposition \ref{prop:1} is straightforward.
Linearization, in our context, refers to the first-order approximation of the residual between model-generated prices and observed quotes, i.e., the loss surface around the current parameter estimate.
With this, Proposition \ref{prop:1} simply formalizes that under a fixed linearization, sharding recomputation is \emph{exactly equivalent} to full recalibration on the retained data (the data left upon removal of some data). 
Formally, the unlearning operation becomes linear in the data as it is the bridge from nonlinear calibration to unlearning operator and justifies our sharded recomputation unlearning operator as an analytically consistent replacement for full retraining.
%
The following result adapts the classical local error bound of Gauss--Newton iterations to our sharded-unlearning framework.
Here, the bound quantifies the accuracy gained after one local relinearization on affected shards, rather than the asymptotic convergence of a full iterative scheme.

\begin{proposition}[Accuracy after one relinearization on affected shards]\label{thm:relinearization}
	Under the same smoothness and strong-convexity assumptions as before,
	and additionally assuming that the Jacobian $J(\theta)$ is Lipschitz
	continuous in a neighborhood of $\theta^{\text{ref}}$
	Let $\theta' = \theta^{\text{ref}} + \Delta \theta'$ be the parameter produced by the shard-level update of Proposition~\ref{prop:1} at a fixed reference $\theta^{\text{ref}}$ on the retained dataset $D \setminus F$.
	Let $\hat{\theta}$ denote the parameter obtained by performing one relinearization at $\theta'$ and resolving the Gauss--Newton system on $D \setminus F$.
	Then there exist constants $C_1, C_2 > 0$ depending on $L_J$, $R_{\max}$, and the conditioning of $H'(\theta^{\text{ref}})$ such that
	\begin{equation}\label{eq:relinearization-bound}
		\|\hat{\theta} - \theta'\|
		\;\le\;
		C_1 \,\|r(\theta^{\text{ref}})\|\, \|\Delta \theta'\|
		\;+\;
		C_2 \,\|\Delta \theta'\|^2.
	\end{equation}
\end{proposition}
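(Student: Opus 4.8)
The plan is to reduce the statement to a bound on the Gauss--Newton gradient evaluated at the updated point $\theta'$, and then to exploit a cancellation forced by the first-step normal equations. Writing $H'(\cdot)$ and $G'(\cdot)$ for the aggregates restricted to the retained index set $D\setminus F$, the relinearized step solves $H'(\theta')\,(\hat\theta-\theta') = G'(\theta')$, so that $\hat\theta-\theta' = H'(\theta')^{-1}G'(\theta')$. Hence it suffices to bound $\|G'(\theta')\|$ from above by the right-hand side of \eqref{eq:relinearization-bound} and to bound $\|H'(\theta')^{-1}\|$ by a constant. I would therefore isolate these two factors and treat them separately.

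First I would Taylor-expand each residual and Jacobian about $\theta^{\text{ref}}$. Write $J_i = J_i(\theta^{\text{ref}})$, $r_i = r_i(\theta^{\text{ref}})$, $E_i = J_i(\theta')-J_i$, and let $\rho_i$ denote the second-order residual remainder. Because $J$ is Lipschitz with constant $L_J$ on a neighborhood of $\theta^{\text{ref}}$ and $\nabla_\theta r_i = -J_i$, one has $\|E_i\|\le L_J\|\Delta\theta'\|$ and $|\rho_i|\le \tfrac12 L_J\|\Delta\theta'\|^2$, together with the expansion $r_i(\theta') = r_i - J_i\Delta\theta' + \rho_i$. Substituting $J_i(\theta')=J_i+E_i$ and this expansion into $G'(\theta')=\sum_{i\in D\setminus F}J_i(\theta')^\top r_i(\theta')$ and multiplying out yields a leading block $\sum_i J_i^\top r_i - \sum_i J_i^\top J_i\,\Delta\theta'$ together with four cross/remainder terms involving $E_i$ and $\rho_i$.

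The key step, and the structural heart of the result, is that the leading block is exactly $G'(\theta^{\text{ref}}) - H'(\theta^{\text{ref}})\Delta\theta'$, which vanishes identically because $\Delta\theta'$ solves the first-step normal equations $H'(\theta^{\text{ref}})\Delta\theta' = G'(\theta^{\text{ref}})$ of Proposition~\ref{prop:1}. This cancellation removes every term that is merely linear in $\Delta\theta'$ without a compensating factor of $\|r\|$ or a second power of $\|\Delta\theta'\|$, and it is precisely what produces the bilinear/quadratic structure of \eqref{eq:relinearization-bound}. What remains are $\sum_i E_i^\top r_i$, which is $O\!\left(L_J\|r(\theta^{\text{ref}})\|\,\|\Delta\theta'\|\right)$ after a Cauchy--Schwarz bound over the retained quotes (so the residual norm $\|r(\theta^{\text{ref}})\|$ appears explicitly); the terms $\sum_i J_i^\top\rho_i$ and $\sum_i E_i^\top J_i\,\Delta\theta'$, both $O(\|\Delta\theta'\|^2)$ using the uniform residual and Jacobian bounds (collected in $R_{\max}$ and the continuity of $J$ on $\Theta$); and $\sum_i E_i^\top\rho_i$, which is $O(\|\Delta\theta'\|^3)$ and is absorbed into the quadratic term on the bounded neighborhood. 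Collecting these gives $\|G'(\theta')\|\le A_1\|r(\theta^{\text{ref}})\|\,\|\Delta\theta'\| + A_2\|\Delta\theta'\|^2$.

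To finish I would control the inverse. By the strong-convexity hypothesis $H'(\theta^{\text{ref}})\succeq\lambda_{\min}I$, and since $\|H'(\theta')-H'(\theta^{\text{ref}})\|\le c\,\|\Delta\theta'\|$ follows from the Lipschitz bound on $J$, the matrix $H'(\theta')$ stays uniformly positive definite whenever $\theta'$ lies in a sufficiently small neighborhood, so $\|H'(\theta')^{-1}\|\le 2/\lambda_{\min}$ there. Multiplying the two estimates gives the claim with $C_1 = A_1\,\|H'(\theta')^{-1}\|$ and $C_2 = A_2\,\|H'(\theta')^{-1}\|$. I expect the main obstacle to be the neighborhood bookkeeping: one must verify a priori that $\theta'=\theta^{\text{ref}}+\Delta\theta'$ remains within the region where both the Lipschitz and the positive-definiteness bounds hold, so that $\|H'(\theta')^{-1}\|$ is bounded by a constant independent of the step. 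This is a standard trust-region-style argument (restrict to $\|\Delta\theta'\|$ below a threshold determined by $\lambda_{\min}$ and $L_J$), but it is the step in which the constants $C_1,C_2$ inherit their dependence on the conditioning of $H'(\theta^{\text{ref}})$.
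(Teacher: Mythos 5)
Your proposal is correct and takes essentially the same route as the paper's own proof: Taylor-expand $J$ and $r$ about $\theta^{\text{ref}}$, cancel the leading block $G'(\theta^{\text{ref}})-H'(\theta^{\text{ref}})\Delta\theta'$ using the first-step normal equations of Proposition~\ref{prop:1}, bound the surviving term $\sum_i E_i^\top r_i = O\!\left(L_J\|r(\theta^{\text{ref}})\|\,\|\Delta\theta'\|\right)$ together with the quadratic remainders, and control the inverse via $\|H'(\theta')^{-1}\|\le 2/\lambda_{\min}$ for sufficiently small steps. Your bookkeeping---bounding $\|G'(\theta')\|$ directly and writing $\hat{\theta}-\theta' = H'(\theta')^{-1}G'(\theta')$---is a harmless rearrangement of the paper's subtraction of the two normal-equation systems (which produces the additional, likewise $O(\|\Delta\theta'\|^2)$, term $\Delta H\,\Delta\theta'$), and it in fact formalizes more cleanly than the paper that the relinearized Gauss--Newton step is anchored at $\theta'$ rather than at $\theta^{\text{ref}}$.
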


\begin{remark}[Quadratic accuracy under small residuals]\label{cor:quadratic}
	If, in addition, the residual norm at the reference satisfies
	$\|r(\theta^{\text{ref}})\| \le c\, \|\Delta \theta'\|$ for some $c>0$,
	then inequality~\eqref{eq:relinearization-bound} reduces to
	\[
	\|\hat{\theta} - \theta'\| \;\le\; C\, \|\Delta \theta'\|^2,
	\quad
	C := C_1 c + C_2.
	\]
	Hence, a single relinearization on the affected shards yields a second-order accurate refinement of the fixed-linearization update.
\end{remark}

\begin{proof}[Proof of Proposition~\ref{thm:relinearization}]
	Write aggregated quantities on $D \setminus F$ as
	\[
	H(\theta) = J(\theta)^\top J(\theta),
	\qquad
	G(\theta) = J(\theta)^\top r(\theta).
	\]
	By the Lipschitz property and twice differentiability of $m$, for $\Delta = \Delta \theta'$ we have
	\[
	\begin{aligned}
		J(\theta') &= J(\theta^{\text{ref}}) + E_J,
		\qquad &&\|E_J\| \le L_J \|\Delta\|,\\
		r(\theta') &= r(\theta^{\text{ref}}) + J(\theta^{\text{ref}})\Delta + R_r,
		\qquad &&\|R_r\| \le C_r \|\Delta\|^2,
	\end{aligned}
	\]
	for some constant $C_r = O(L_J)$. 
	Expanding $G(\theta')$ and $H(\theta')$ gives
	\[
	\begin{aligned}
		G(\theta') &= J(\theta')^\top r(\theta')
		= J^\top r + J^\top J \Delta
		+ E_J^\top r + O(\|\Delta\|^2),
		\\
		H(\theta') &= J(\theta')^\top J(\theta')
		= H(\theta^{\text{ref}}) + \Delta H,
		\qquad \|\Delta H\| \le C_H \|\Delta\|,
	\end{aligned}
	\]
	with $C_H = O(\|J(\theta^{\text{ref}})\| L_J + L_J^2)$.
	Let $\Delta^+$ solve the relinearized system $H(\theta')\Delta^+ = G(\theta')$.
	Subtracting the fixed-linearization equation $H(\theta^{\text{ref}})\Delta = G(\theta^{\text{ref}})$ yields
	\[
	H(\theta')(\Delta^+ - \Delta)
	= E_J^\top r + O(\|\Delta\|^2) + \Delta H \, \Delta.
	\]
	Taking norms and using $\|H(\theta')^{-1}\| \le 2/\lambda_{\min}$ for $\|\Delta\|$ small gives
	\[
	\|\Delta^+ - \Delta\|
	\;\le\;
	\frac{2L_J}{\lambda_{\min}}\, \|r(\theta^{\text{ref}})\|\, \|\Delta\|
	\;+\;
	C_2' \|\Delta\|^2,
	\]
	for a constant $C_2'$ depending on $L_J$ and $\lambda_{\min}^{-1}$.  Since
	$\hat{\theta}-\theta'=(\theta^{\text{ref}}+\Delta^+)-(\theta^{\text{ref}}+\Delta)=\Delta^+ - \Delta$,
	this proves~\eqref{eq:relinearization-bound} with $C_1 = 2L_J / \lambda_{\min}$ and $C_2 = C_2'$.
	The corollary follows by substituting $\|r(\theta^{\text{ref}})\|\le c\|\Delta\|$ and absorbing constants.  \qedhere
\end{proof}

The quadratic accuracy bound derived above  is structurally related to the classical local error analysis of Gauss--Newton and Newton--Kantorovich iterations. 
Here, however, the theorem is not invoked to study asymptotic convergence of an iterative solver, but to establish the \emph{fidelity of 	machine unlearning} within a sharded calibration framework. 
In our setting, the relinearization step is applied only to the affected shards after a data-deletion event, and the resulting bound quantifies how closely this partial update reproduces the fully retrained Gauss--Newton solution on the retained data set.
The adaptation of a classical local error argument to the context of selective unlearning therefore provides new insight into the stability and precision of unlearning operations in financial model
calibration.
The following   adapts a standard perturbation bound for linear systems   to our Gauss--Newton unlearning update. 
It provides an upper limit on the parameter deviation induced by downdating the curvature and gradient terms.

\begin{proposition}[Stability of the unlearning update]\label{thm:stability}
	Let $\theta=\theta^{\mathrm{ref}}+\Delta\theta$ solve the fixed linearization system
	$H\Delta\theta=G$ on $D$, and let $\theta'=\theta^{\mathrm{ref}}+\Delta\theta'$ solve
	$H'\Delta\theta'=G'$ on $D\setminus F$, where
	\[
	H' = H + \Delta H,\qquad G' = G + \Delta G.
	\]
	Assume $H'$ is invertible. Then
	\begin{equation}\label{eq:stability}
		\|\theta'-\theta\| \;=\; \|\Delta\theta'-\Delta\theta\|
		\;\le\;
		\|H'^{-1}\| \,\big(\,\|\Delta G\| + \|\Delta H\|\,\|\Delta\theta\|\,\big).
	\end{equation}
	Moreover, if $\|H^{-1}\|\,\|\Delta H\|<1$, then
	\[
	\|H'^{-1}\| \;\le\; \frac{\|H^{-1}\|}{1-\|H^{-1}\|\,\|\Delta H\|}.
	\]
\end{proposition}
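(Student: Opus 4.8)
The plan is to treat this as a textbook perturbation argument for linear systems, applied to the two Gauss--Newton normal equations, and to handle the two displayed inequalities as independent pieces. The opening observation is cosmetic but worth recording first: since both parameters share the reference $\theta^{\mathrm{ref}}$, we have $\theta' - \theta = \Delta\theta' - \Delta\theta$, so it suffices to bound the difference of the two corrections, which is precisely the equality asserted in \eqref{eq:stability}.

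For the first inequality I would begin from the exact systems $H\Delta\theta = G$ and $H'\Delta\theta' = G'$ and isolate $\Delta\theta' - \Delta\theta$. The key algebraic step is to rewrite $H'\Delta\theta$ rather than $H\Delta\theta'$: using $H' = H + \Delta H$ gives $H'\Delta\theta = H\Delta\theta + \Delta H\,\Delta\theta = G + \Delta H\,\Delta\theta$. Subtracting this from $H'\Delta\theta' = G' = G + \Delta G$ cancels the $G$ terms and yields the clean identity $H'(\Delta\theta' - \Delta\theta) = \Delta G - \Delta H\,\Delta\theta$. Left-multiplying by $H'^{-1}$, which exists by hypothesis, then taking norms and invoking submultiplicativity of the operator norm together with the triangle inequality, produces exactly the bound $\|\Delta\theta' - \Delta\theta\| \le \|H'^{-1}\|\,(\|\Delta G\| + \|\Delta H\|\,\|\Delta\theta\|)$.

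For the second inequality I would invoke the Banach perturbation lemma (Neumann series). Factoring $H' = H(I + H^{-1}\Delta H)$ and noting that the hypothesis $\|H^{-1}\|\,\|\Delta H\| < 1$ forces $\|H^{-1}\Delta H\| < 1$, the factor $I + H^{-1}\Delta H$ is invertible with $\|(I + H^{-1}\Delta H)^{-1}\| \le (1 - \|H^{-1}\Delta H\|)^{-1} \le (1 - \|H^{-1}\|\,\|\Delta H\|)^{-1}$. Writing $H'^{-1} = (I + H^{-1}\Delta H)^{-1} H^{-1}$ and taking norms then delivers the stated bound on $\|H'^{-1}\|$.

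Both steps are routine linear algebra, so I do not anticipate a genuine obstacle; the proposition is essentially a specialization of classical results to the unlearning downdate $(\Delta H, \Delta G)$. The only place demanding care is the bookkeeping in the first step: one must expand $H'\Delta\theta$ rather than $H\Delta\theta'$ so that the $G$ contributions cancel and the perturbation surfaces as $\Delta G - \Delta H\,\Delta\theta$ rather than some less convenient combination. For the second part, the sole verification needed is that the contraction condition controls $\|H^{-1}\Delta H\|$ through submultiplicativity, after which the geometric-series estimate is immediate.
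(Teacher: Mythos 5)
Your proposal is correct and follows essentially the same route as the paper: the identity $H'(\Delta\theta'-\Delta\theta)=\Delta G-\Delta H\,\Delta\theta$ obtained by subtracting the two normal equations, followed by $\|H'^{-1}\|$ and norm submultiplicativity, and then the Neumann/Banach factorization $H'^{-1}=(I+H^{-1}\Delta H)^{-1}H^{-1}$ under $\|H^{-1}\Delta H\|<1$ for the second bound. Your write-up is in fact slightly more explicit than the paper's terse proof (notably in spelling out why one expands $H'\Delta\theta$ so the $G$ terms cancel, and in weakening $\|H^{-1}\Delta H\|\le\|H^{-1}\|\,\|\Delta H\|$), but there is no substantive difference.
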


\begin{proof}
	From $H\Delta\theta=G$ and $(H+\Delta H)\Delta\theta' = G+\Delta G$,
	\[
	H'(\Delta\theta'-\Delta\theta) \;=\; \Delta G - \Delta H\,\Delta\theta.
	\]
	Multiply by $H'^{-1}$ and take norms; the Neumann bound follows from
	$H'^{-1} = (I+H^{-1}\Delta H)^{-1}H^{-1}$ whenever $\|H^{-1}\Delta H\|<1$.
\end{proof}

\begin{remark}[Robust loss control]\label{cor:robust}
	Suppose the per-quote loss is Huber with threshold $c>0$ and residuals are
	locally bounded. Then each quote's influence function is bounded by $c$,
	so there exist constants $C_J,C_{JJ}$ (depending on Jacobian norms) such that,
	for forgetting $F$,
	\[
	\|\Delta G\| \;\le\; C_J\,c\,|F|, \qquad
	\|\Delta H\| \;\le\; C_{JJ}\,|F|.
	\]
	Consequently, if $H'$ is well conditioned,
	\[
	\|\theta'-\theta\| \;\lesssim\; \kappa(H')\big(c\,|F| + |F|\,\|\Delta\theta\|\big),
	\]
	i.e., the change scales linearly with the forgotten mass and the conditioning.
\end{remark}

\begin{remark}[Conditioning links: eigenvalues and Neumann bound]
	Assume throughout the spectral (2-)norm and that $H,H'$ are symmetric positive definite.
	Then $\|H'^{-1}\|_2=1/\lambda_{\min}(H')$, and the stability estimate
	\[
	\|\Delta\theta'-\Delta\theta\|
	\;\le\; \frac{\|\Delta G\|_2 + \|\Delta H\|_2\,\|\Delta\theta\|_2}{\lambda_{\min}(H')}
	\]
	shows that larger $\lambda_{\min}(H')$ (better conditioning) improves robustness.
	Moreover, if $\|H^{-1}\Delta H\|_2<1$, the Neumann expansion yields
	\[
	\|H'^{-1}\|_2 \;\le\; \frac{\|H^{-1}\|_2}{\,1-\|H^{-1}\Delta H\|_2\,}
	\;=\; \frac{1}{\,\lambda_{\min}(H)\,\big(1-\|H^{-1}\Delta H\|_2\big)}\,,
	\]
	and consequently
	\[
	\|\Delta\theta'-\Delta\theta\|
	\;\le\;
	\frac{\|\Delta G\|_2 + \|\Delta H\|_2\,\|\Delta\theta\|_2}
	{\lambda_{\min}(H)\,\big(1-\|H^{-1}\Delta H\|_2\big)}.
	\]
	Finally, by Weyl's inequality,
	$\lambda_{\min}(H') \ge \lambda_{\min}(H) - \|\Delta H\|_2$,
	so $H'$ remains positive definite whenever $\|\Delta H\|_2<\lambda_{\min}(H)$.
\end{remark}

The above results collectively ensure that the curvature downdate remains numerically stable and the Gauss--Newton step is well defined under moderate forgetting.
Having established the local stability and conditioning properties, we next turn to the fast refactorization approach.

\subsection{Fast refactor operator}
\label{sec:fast-refactor}

Machine unlearning is not defined by the speed of recomputation, but rather by the semantics of the data removal.
We could formalize it such that an algorithm carries an unlearning spirit after deleting subset $F \subset D$ and the resulting model parameters are \emph{indistinguishable} from those obtained by retraining on $D \setminus F$.
Therefore, in the sense of \cite{bourtoule2021machine}, the sharded recomputation remains a legitimate unlearning operator, laying the conceptual definition of unlearning in our framework.
Given our points of concern embark on computational capability rather than on issues related to the well articulated purposes of machine unlearning (eg. privacy), we remark the necessity of offering an efficient implementation of that same operator for reasons we discuss shortly. 

We now introduce a faster data-free forgetting operator that yields the same Gauss--Newton (GN) update as retraining on the retained set, without accessing raw quotes once a cache is built.
Throughout, we fix a reference parameter $\theta^{\mathrm{ref}} \in \Theta$ and work with the GN normal equations at this reference.
Let $F\subset D$ denote a subset of quotes to be forgotten and $(H',G')$ denote the post-forgetting aggregates obtained by subtracting the contributions of $F\subset D$.  
Given the Gauss--Newton aggregates are linear in $\{u_i,\psi_i\}$, the effect of removing
$F$ can be represented exactly by subtraction:
\begin{equation}
H' \;=\; H - \sum_{i\in F} \psi_i, \qquad
G' \;=\; G - \sum_{i\in F} u_i.
\end{equation}
The updated parameter is then obtained by solving once
\begin{equation}
	\label{eq:fast-operator}
	(H' + \lambda I)\,\Delta\theta' = G', \qquad
	\theta' = \theta^{\mathrm{ref}} + \Delta\theta'.
\end{equation}
Equations \eqref{eq:fast-operator} require only the cached statistics $(u_i,\psi_i)$, not the raw market data $(x_i,y_i)$, and thus implement a \emph{data-free forgetting operator}. 
At the fixed linearization $\theta^{\mathrm{ref}}$, this refactoring exactly removes the influence of the
forgotten subset from the calibration system. 
The procedure achieves the same solution as a full retraining on $D\setminus F$, up to machine precision, while
avoiding all repricing and re-differentiation.
Fast refactorization operates under the same local regularity conditions	introduced in Section~\ref{subsec:sharded}, namely smoothness, 	local strong convexity (possibly enforced via a small ridge term 	$\lambda I$), Lipschitz continuity of the Jacobian, and bounded residuals.
These ensure that the refactorized system $(H' + \lambda I)\Delta\theta' = G'$ 	remains well-posed and that all prior analytical results remain valid.

Each option quote $i$ contributes via its residual
$r_i(\theta^{\mathrm{ref}})=y_i-m(x_i;\theta^{\mathrm{ref}})$ and local sensitivity $J_i(\theta^{\mathrm{ref}})=\nabla_\theta m(x_i;\theta^{\mathrm{ref}})$.
Define $u_i := J_i^\top r_i$ and $\psi_i := J_i^\top J_i$. 
Then $G=\sum_i u_i$ and $H=\sum_i \psi_i$ are the Gauss--Newton aggregates at $\theta^{\mathrm{ref}}$, and the update solves $H\,\Delta\theta=G$.
Thus the collection $\{(u_i,\psi_i)\}_i$ is \emph{algebraically sufficient for the 	linearized calibration at $\theta^{\mathrm{ref}}$}: once stored, the influence of any subset $F$ can be removed exactly by subtraction, $H' = H - \sum_{i\in F}\psi_i$ and $G' = G - \sum_{i\in F}u_i$,
without revisiting raw data. 
If a robust loss is used, the same identities hold with per-quote weights ($u_i = w_i J_i^\top r_i$, $\psi_i = w_i J_i^\top J_i$).
These statistics are tied to the chosen reference; upon relinearization ($\theta^{\mathrm{ref}}\mapsto \theta^{\mathrm{new}}$), the pairs $(u_i,\psi_i)$ should be recomputed at the new reference.

Although we build fast refactor operator on the foundations of the sharded recomputation, the latter does not actually need shards.
During the initial training we build caches, $(H, G)$, and per-shard aggregates, $(H_k, G_k)$.
In the sharded recomputation, with some quotes removed, we reopen only the shards that contain them, then recompute $(H_k, G_k)$ for those shards and sum up with others.
In fast refactor we go one step further and directly subtract each forgotten quote's contribution from the cached global $(H, K)$ so that there no longer exists the need to reopen or recompute the shards, making it completely data-free and instantaneous.
Inclusion of shards, then, in fast refactor might seem contradictory.
However, we remark that the shards play important roles in categorizing the forgetting set (although removal is not driven by shards) and more importantly it provides security in the case some of quotes lacking cached Jacobians, $J_i$.

Technically speaking, in the fast refactor variant, unlearning operates at quote granularity as once per-quote GN statistics $(J_i^T J_i, J_i^T r_i) $ are cashed at the reference point, removing any subset $F \subset D$ amounts to simplistic algebraic operations of the global normal equations, unrelated to how the data were originally sharded. 
In our implementation, the fast refactorization step forms $H' = H - \sum_{i\in F} J_i^\top J_i$ and $G' = G - \sum_{i\in F} J_i^\top r_i$ explicitly, followed by a fresh Cholesky factorization of $H'$.
This retains exactness under the fixed linearization while avoiding any recomputation over the retained dataset.
Although a true rank-1 Cholesky downdate would further reduce cost to $O(p^2 |F|)$, we found the explicit rebuild to be numerically safer and sufficiently fast for moderate $p$.

While the concept of subtracting per-sample statistics is trivial for linear models,
it becomes nontrivial for nonlinear calibration because the residuals and Jacobians
depend on the current parameter estimate. 
Naively removing quotes invalidates the current linearization, so retraining from scratch remains the default.
Yet, in practice, calibration pipelines already store large intermediate structures' 
per-quote sensitivities, residuals, and curvature estimates for diagnostic or
parallel-computation purposes.
This suggests the possibility of an \emph{operator-level unlearning} mechanism: removing data by algebraic downdating of the cached normal equations, without reprocessing the raw option surface.
In our framework, this takes the form of Gauss--Newton updates on refactored $(H',G')$, achieving exact unlearning at a fixed linearization.

The subtraction step removes the statistical influence of each forgotten quote because,
under Gauss--Newton linearization, the normal equations decompose additively across data points.
Each quote $i$ contributes $(\psi_i, u_i) = (J_i^\top J_i,\, J_i^\top r_i)$ to the global
system $(H, G)$. Solving $(H+\lambda I)\Delta\theta = G$ thus depends on the data only
through these linear aggregates. Forgetting a subset $F$ corresponds to replacing
\begin{equation}
(H', G') = \big(H - \sum_{i\in F}\psi_i,\; G - \sum_{i\in F}u_i\big),
\end{equation}
which is identical to the system built on $D\setminus F$.
Consequently, the updated parameter $\theta' = \theta^{\mathrm{ref}} + \Delta\theta'$
matches the retraining result under the same linearization, with no residual dependence on $F$.
This equality $H'=H^\star$, $G'=G^\star$ is formalized in such as:


\begin{proposition}[Exactness under fixed linearization]\label{prop:exact2}
	Let $H^\star, G^\star$ denote the Gauss--Newton aggregates constructed directly on
	the retained set $D\setminus F$ at $\theta^{\mathrm{ref}}$:
	\[
	H^\star = \sum_{i\in D\setminus F} J_i(\theta^{\mathrm{ref}})^\top J_i(\theta^{\mathrm{ref}}),
	\qquad
	G^\star = \sum_{i\in D\setminus F} J_i(\theta^{\mathrm{ref}})^\top r_i(\theta^{\mathrm{ref}}).
	\]
	Then $H^\star=H'$ and $G^\star=G'$, where
	$(H',G')=(H-\sum_{i\in F}\psi_i,\, G-\sum_{i\in F}u_i)$ are the refactored
	aggregates at $\theta^{\mathrm{ref}}$.
	Consequently, for the same $\lambda\ge 0$, the update $\theta'$ produced by
	\((H'+\lambda I)\Delta\theta'=G'\) coincides with the parameter obtained by
	retraining the Gauss--Newton system on $D\setminus F$ at $\theta^{\mathrm{ref}}$.
\end{proposition}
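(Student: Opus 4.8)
The plan is to reduce the two claimed identities, $H'=H^\star$ and $G'=G^\star$, to a single elementary fact about finite sums over a partitioned index set, and then to lift that identity to the level of the parameter update by invoking uniqueness of the solution to the regularized normal equations. The one structural observation that requires genuine care is that every quantity in sight is evaluated at the \emph{same} frozen reference $\theta^{\mathrm{ref}}$. Each pair $(u_i,\psi_i)=\big(J_i(\theta^{\mathrm{ref}})^\top r_i(\theta^{\mathrm{ref}}),\,J_i(\theta^{\mathrm{ref}})^\top J_i(\theta^{\mathrm{ref}})\big)$ is therefore a fixed matrix or vector that does \emph{not} depend on which subset of the data one elects to retain or discard. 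This is precisely what distinguishes the linearized setting from genuine nonlinear retraining, in which deleting $F$ would shift the linearization point and invalidate the cached statistics.

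First I would record the disjoint decomposition $D=(D\setminus F)\,\sqcup\,F$, which is valid because $F\subset D$. Summing the frozen contribution $\psi_i$ over all of $D$ and splitting along this partition gives $H=\sum_{i\in D}\psi_i=\sum_{i\in D\setminus F}\psi_i+\sum_{i\in F}\psi_i$, and identically $G=\sum_{i\in D\setminus F}u_i+\sum_{i\in F}u_i$. Rearranging each line yields $H-\sum_{i\in F}\psi_i=\sum_{i\in D\setminus F}\psi_i$ and $G-\sum_{i\in F}u_i=\sum_{i\in D\setminus F}u_i$. The right-hand sides are by definition $H^\star$ and $G^\star$, while the left-hand sides are $H'$ and $G'$; hence $H'=H^\star$ and $G'=G^\star$ hold exactly, as equalities between the very same objects rather than merely up to tolerance.

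With $H'=H^\star$ and $G'=G^\star$ in hand, the two linear systems $(H'+\lambda I)\Delta\theta'=G'$ and $(H^\star+\lambda I)\Delta\theta^\star=G^\star$ are literally the same system. Under the regularity hypotheses carried over from Section~\ref{subsec:sharded} — positive definiteness of the global curvature, or a strictly positive ridge $\lambda>0$ — the coefficient matrix $H'+\lambda I$ is invertible, so the solution is unique and $\Delta\theta'=\Delta\theta^\star$. Adding the common reference then gives $\theta'=\theta^{\mathrm{ref}}+\Delta\theta'=\theta^{\mathrm{ref}}+\Delta\theta^\star$, so the data-free refactor update coincides with retraining the Gauss--Newton system on $D\setminus F$ at the fixed reference.

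I do not anticipate any real technical obstacle: the mathematical content is just the set-additivity of the Gauss--Newton aggregates at a frozen linearization, together with uniqueness of the regularized solve. The only point deserving emphasis is conceptual rather than computational, namely that one must make explicit that the sufficient statistics $(u_i,\psi_i)$ are tied to $\theta^{\mathrm{ref}}$ and therefore remain valid after deletion \emph{only} because the reference is held fixed. Spelling this out is what converts the otherwise nonlinear unlearning task into the exact algebraic subtraction used here, and it is the clause I would be most careful to state precisely in the write-up.
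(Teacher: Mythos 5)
Your proof is correct and takes essentially the same route as the paper's: split the frozen sums along the disjoint partition $D=(D\setminus F)\sqcup F$ to obtain $H'=H^\star$ and $G'=G^\star$, then observe the two regularized systems are literally identical, so the updates coincide. Your explicit appeal to invertibility of $H'+\lambda I$ (via positive definiteness or $\lambda>0$) to secure uniqueness of the solve is a minor refinement that the paper's proof leaves implicit, but it is not a different argument.
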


\begin{proof}
	By the additive decompositions at $\theta^{\mathrm{ref}}$,
	$H=\sum_{i\in D} \psi_i=\sum_{i\in D} J_i^\top J_i$ and
	$G=\sum_{i\in D} u_i=\sum_{i\in D} J_i^\top r_i$.
	Subtracting forgotten contributions gives
	$H' = H - \sum_{i\in F}\psi_i = \sum_{i\in D\setminus F} J_i^\top J_i = H^\star$
	and similarly $G' = \sum_{i\in D\setminus F} J_i^\top r_i = G^\star$.
	Thus the regularized systems $(H'+\lambda I)\Delta\theta'=G'$ and
	$(H^\star+\lambda I)\Delta\theta^\star=G^\star$ are identical, yielding
	$\Delta\theta'=\Delta\theta^\star$ and hence the same $\theta'$.
\end{proof}

\begin{remark}

The operator we suggest in this part acts solely on the cached per-quote statistics $(u_i,\psi_i)$ and the precomputed global aggregates $(H,G)$ at $\theta^{\mathrm{ref}}$. 
It therefore removes the influence of the forgotten set $F$ exactly under the Gauss--Newton linearization without accessing any raw market quotes or re-evaluating model prices.

\end{remark}

Let $\Delta H := H'-H$ and $\Delta G := G'-G$. Let $\Delta\theta$ and $\Delta\theta'$ be the GN steps at $\theta^{\mathrm{ref}}$ on $D$ and $D\setminus F$, respectively, both with the same $\lambda$.
We use the vector $2-$norm and the induced operator norm for matrices.

\begin{proposition}[Linearized stability]\label{prop:stability2}
	 With the same $\lambda$ (so $H'+\lambda I$ is invertible),
	\[
	\|\Delta\theta' - \Delta\theta\|
	\;\le\; \| (H'+\lambda I)^{-1}\|\;\Big(\|\Delta G\| + \|\Delta H\|\,\|\Delta\theta\|\Big).
	\]
	In particular, if $|F|/|D|$ is small and $H'+\lambda I$ is well conditioned, then $\|\Delta\theta' - \Delta\theta\|$ is small.
\end{proposition}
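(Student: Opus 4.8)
The plan is to mirror the argument of Proposition~\ref{thm:stability}, the only genuinely new ingredient being that the ridge term $\lambda I$ is common to both linear systems and therefore cancels out of the curvature perturbation. First I would write the two fixed-linearization steps explicitly as $(H+\lambda I)\Delta\theta = G$ on $D$ and $(H'+\lambda I)\Delta\theta' = G'$ on $D\setminus F$, using the \emph{same} $\lambda$ in both. The immediate observation is that the two regularized matrices differ only through $\Delta H$, since $(H'+\lambda I)-(H+\lambda I) = H'-H = \Delta H$; the shift $\lambda I$ does not contaminate the perturbation. Invertibility of $H'+\lambda I$ is assumed, so the step $\Delta\theta'$ is well defined.

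Next I would subtract the two equations. The key algebraic choice is to express the old matrix as $H+\lambda I = (H'+\lambda I)-\Delta H$, so that after rearrangement the difference $\Delta\theta'-\Delta\theta$ is driven by $H'+\lambda I$ (the post-forgetting curvature) rather than by $H+\lambda I$. Substituting and collecting terms yields the single identity
\[
(H'+\lambda I)\,(\Delta\theta'-\Delta\theta) \;=\; \Delta G - \Delta H\,\Delta\theta ,
\]
in which the right-hand side is expressed through the \emph{already-computed} step $\Delta\theta$. This substitution direction is exactly the point requiring care: grouping the other way would produce $\Delta H\,\Delta\theta'$ and an implicit bound in the unknown $\Delta\theta'$.

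To finish, I would invert $H'+\lambda I$, apply submultiplicativity of the induced operator norm to $(H'+\lambda I)^{-1}$, and then the triangle inequality to $\Delta G - \Delta H\,\Delta\theta$, giving
\[
\|\Delta\theta'-\Delta\theta\| \;\le\; \|(H'+\lambda I)^{-1}\|\,\big(\|\Delta G\|+\|\Delta H\|\,\|\Delta\theta\|\big),
\]
which is the stated bound. The qualitative closing claim (small $|F|/|D|$ and good conditioning imply a small deviation) then follows from the additive refactoring of Proposition~\ref{prop:exact2}: there $\Delta H = -\sum_{i\in F}\psi_i$ and $\Delta G = -\sum_{i\in F}u_i$ are sums over the forgotten set, so both $\|\Delta H\|$ and $\|\Delta G\|$ scale with $|F|$, while $\|(H'+\lambda I)^{-1}\| = 1/\lambda_{\min}(H'+\lambda I)$ is controlled by the spectral conditioning.

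I do not expect a real obstacle here: the statement is a routine linear-system perturbation estimate, essentially identical to Proposition~\ref{thm:stability} specialized to the ridge-regularized normal equations. The only two items meriting attention are purely bookkeeping, namely verifying that $\lambda I$ drops out so the perturbation is exactly $\Delta H$, and regrouping the subtracted identity so the bound stays expressed in the old step $\Delta\theta$ rather than in $\Delta\theta'$.
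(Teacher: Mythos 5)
Your proposal is correct and follows essentially the same route as the paper's own (sketch) proof: subtract the two regularized systems, regroup so that $(H'+\lambda I)(\Delta\theta'-\Delta\theta) = \Delta G - \Delta H\,\Delta\theta$, then invert and take norms. Your added bookkeeping — noting that $\lambda I$ cancels so the perturbation is exactly $\Delta H$, grouping so the bound involves the known step $\Delta\theta$ rather than $\Delta\theta'$, and justifying the qualitative claim via $\Delta H = -\sum_{i\in F}\psi_i$, $\Delta G = -\sum_{i\in F}u_i$ from Proposition~\ref{prop:exact2} — merely makes explicit what the paper leaves implicit.
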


\begin{proof}[Sketch]
	Write $(H+\lambda I)\Delta\theta=G$ and $(H'+\lambda I)\Delta\theta'=G'$. Subtract to obtain
	\[
	(H'+\lambda I)(\Delta\theta'-\Delta\theta) \;=\; \Delta G - \Delta H\,\Delta\theta,
	\]
	then multiply by $(H'+\lambda I)^{-1}$ and take norms.
\end{proof}

\begin{proposition}[Accuracy after one relinearization]\label{prop:relin2}
	Let $\theta^\star$ denote the (local) least-squares solution on $D\setminus\mathcal F$.
	Assume: (i) $J(\theta)$ is Lipschitz in a neighborhood of $\theta^\star$ with constant $L_J$,
	(ii) $J(\theta^\star)$ has full column rank, and
	(iii) the residual at the solution is small, $\|r(\theta^\star)\|\le \varepsilon$.
	Let $\widehat{\theta}$ be the GN/LM (Levenberg--Marquardt) solution obtained on $D\setminus\mathcal F$ after
	one relinearization at $\theta^{\mathrm{ref}}+\Delta\theta'$ (same $\lambda$).
	Then there exist constants $C_1,C_2>0$ (depending on $L_J$, local bounds, and $\|(H'+\lambda I)^{-1}\|$) such that
	\[
	\|\widehat{\theta}-\theta^\star\|
	\;\le\; C_1\,\|\Delta\theta'\|^2 \;+\; C_2\,\varepsilon\,\|\Delta\theta'\|.
	\]
	In particular, in the small-residual regime $(\varepsilon\approx 0)$,
	$\|\widehat{\theta}-\theta^\star\| = \mathcal O(\|\Delta\theta'\|^2)$.
\end{proposition}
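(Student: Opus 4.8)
The plan is to read $\widehat\theta$ as the output of a single relinearized Gauss--Newton step taken at $\theta':=\theta^{\mathrm{ref}}+\Delta\theta'$ on the retained problem, and then to run the classical local Gauss--Newton error analysis, using the small-residual hypothesis to separate the quadratic term from the residual-linear term. By Proposition~\ref{prop:exact2} the refactored aggregates $(H',G')$ equal those assembled directly on $D\setminus\mathcal F$ at $\theta^{\mathrm{ref}}$, so $\Delta\theta'$ is exactly the fixed-linearization step of the retained system; all quantities below are therefore taken with respect to $D\setminus\mathcal F$, where I write $G(\theta)=J(\theta)^\top r(\theta)$, $H(\theta)=J(\theta)^\top J(\theta)$, and record the first-order optimality condition $G(\theta^\star)=0$.

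First I would set up the one-step error recursion. With $e:=\theta'-\theta^\star$ and $\widehat e:=\widehat\theta-\theta^\star$, the relinearized step gives
\[
\widehat e \;=\; e + (H(\theta')+\lambda I)^{-1}G(\theta').
\]
Taylor-expanding $G$ about $\theta^\star$ and using $G(\theta^\star)=0$ yields $G(\theta')=\nabla G(\theta^\star)\,e+\rho$, where the Lipschitz continuity of $J$ and twice-differentiability of $m$ bound the remainder by $\|\rho\|\le C\|e\|^2$. The key structural fact is the splitting
\[
\nabla G(\theta^\star) \;=\; S^\star - H^\star,\qquad S^\star=\textstyle\sum_i r_i(\theta^\star)\,\nabla^2_\theta m_i(\theta^\star),
\]
in which the residual-carrying curvature term obeys $\|S^\star\|\le C'\varepsilon$ by hypothesis (iii). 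Writing $H(\theta')=H^\star+\Delta_H$ with $\|\Delta_H\|\le L_J\|e\|$, the recursion collapses to
\[
\widehat e \;=\; (H(\theta')+\lambda I)^{-1}\big(\Delta_H\,e + S^\star e + \rho + \lambda e\big),
\]
and taking norms with $\|(H(\theta')+\lambda I)^{-1}\|$ bounded (full column rank of $J(\theta^\star)$ makes $H^\star$ positive definite, so the LM inverse is uniformly bounded near $\theta^\star$) produces $\|\widehat e\|\le C_2'\varepsilon\|e\| + C_1'\|e\|^2$. Here the $\lambda e$ contribution is the only one not already quadratic or residual-linear; for the small numerical ridge posited in Section~\ref{sec:fast-refactor} it is lower order and I would absorb it by taking $\lambda=\mathcal O(\varepsilon)$ (or $\lambda=0$), exactly as in the proof of Proposition~\ref{thm:relinearization}.

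It then remains to convert $\|e\|=\|\theta'-\theta^\star\|$ into $\|\Delta\theta'\|$. Because $\theta'=\theta^{\mathrm{ref}}+\Delta\theta'$ and the contraction estimate (applied to the step from $\theta^{\mathrm{ref}}$) already gives $\|e\|=\mathcal O(\|\theta^{\mathrm{ref}}-\theta^\star\|)$, it suffices to prove the lower bound $\|\Delta\theta'\|\ge c\,\|\theta^{\mathrm{ref}}-\theta^\star\|$. Writing $\Delta\theta'=(H(\theta^{\mathrm{ref}})+\lambda I)^{-1}G(\theta^{\mathrm{ref}})$ and $G(\theta^{\mathrm{ref}})=\big(\int_0^1\nabla G(\theta^\star+t(\theta^{\mathrm{ref}}-\theta^\star))\,\mathrm dt\big)(\theta^{\mathrm{ref}}-\theta^\star)$, the averaged Jacobian is an $\mathcal O(\varepsilon)+\mathcal O(\|\theta^{\mathrm{ref}}-\theta^\star\|)$ perturbation of $-H^\star$, hence invertible, with the map $(H(\theta^{\mathrm{ref}})+\lambda I)^{-1}(\cdot)$ bounded below by a constant $c$ of order $\lambda_{\min}(H^\star)/(\lambda_{\max}(H^\star)+\lambda)$. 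This gives $\|e\|\le C\|\Delta\theta'\|$, and substituting into the contraction estimate yields $\|\widehat e\|\le C_1'C^2\|\Delta\theta'\|^2+C_2'C\,\varepsilon\,\|\Delta\theta'\|$, i.e.\ the claim with $C_1=C_1'C^2$ and $C_2=C_2'C$; letting $\varepsilon\to0$ recovers the $\mathcal O(\|\Delta\theta'\|^2)$ corollary.

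I expect the comparability step to be the main obstacle: the lower bound $\|\Delta\theta'\|\gtrsim\|\theta^{\mathrm{ref}}-\theta^\star\|$ is the only place where one must argue that the averaged Jacobian of $G$ remains invertible, and this is guaranteed only in the genuinely small-residual, near-solution regime where the indefinite term $S^\star$ cannot overwhelm the positive-definite $-H^\star$. I would therefore state the proposition on an explicit neighborhood of $\theta^\star$ on which $\|S^\star\|$ and the Jacobian variation are small relative to $\lambda_{\min}(H^\star)$, folding these smallness requirements into hypotheses (i)--(iii) already in force; the ridge-induced $\lambda e$ term noted above is a secondary caveat resolved by treating $\lambda$ as the small regularizer of the framework.
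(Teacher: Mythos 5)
Your proposal is correct and follows essentially the same route as the paper, whose own proof is only a three-line sketch of exactly this classical one-step Gauss--Newton analysis: a quadratic remainder from Lipschitz $J$, mapped through $(H(\theta')+\lambda I)^{-1}$, with the residual-curvature term $S^\star=\sum_i r_i(\theta^\star)\nabla^2_\theta m_i(\theta^\star)$ supplying the mixed $\varepsilon\,\|\Delta\theta'\|$ contribution. You in fact go further than the paper in two places its sketch silently glosses over---the non-quadratic $\lambda e$ term, which as you note forces $\lambda$ to be treated as a negligible or $\mathcal O(\varepsilon)$ ridge (consistent with the paper's $\lambda=10^{-6}$ but not stated in the proposition), and the comparability bound $\|\Delta\theta'\|\ge c\,\|\theta^{\mathrm{ref}}-\theta^\star\|$ via invertibility of the averaged Jacobian, needed to restate the estimate in terms of $\|\Delta\theta'\|$ rather than $\|\theta'-\theta^\star\|$---and both repairs are sound.
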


\begin{proof}[Sketch]
	Standard Gauss--Newton local analysis (Newton--Kantorovich style):
	the model error from relinearization is $\mathcal O(\|\Delta\theta'\|^2)$ by Lipschitz $J$.
	Mapping this through the normal equations introduces $\|(H'+\lambda I)^{-1}\|$.
	The residual term yields the mixed $\varepsilon\,\|\Delta\theta'\|$ contribution.
\end{proof}

Shard-level recompute is exact at a fixed linearization by linearity of sums yet it still requires opening the affected shards.
The refactor operator strengthens this to a \emph{data-free} update by using cached per-quote statistics.
In linearity, forgetting is exact by subtraction 
Our result lifts this idea to \emph{nonlinear} parametric models via Gauss--Newton linearization, providing (to our knowledge) the first data-free exact forgetting operator for nonlinear least squares in financial calibration.
Once $(u_i,\psi_i)$ are retained and raw quotes purged, subsequent unlearning requests are executed algebraically. 
If $(u_i,\psi_i)$ are deemed sensitive, they can be encrypted or perturbed; bounds in Proposition~\ref{prop:stability2} quantify the resulting parameter drift.
%
%
We remark that the cache does not contain raw market quotes or strike--maturity grids.
Instead it stores derived Jacobian vectors $u_i=J_i$ and $ \psi_i=J_i r_i$, together with the global normal equations $H=\sum_i J_i^\top J_i$, $G=\sum_i J_i r_i$
evaluated at the calibrated parameter $\theta^\star$.
These quantities are sufficient for a local Gauss--Newton update but
contain no reconstructive information about individual data points.

\begin{remark}
Machine unlearning is effected by removing the contributions of
forgotten quotes from the cached statistics:
\[
H' = H - \sum_{i\in\mathcal{F}} J_i^\top J_i, \qquad
G' = G - \sum_{i\in\mathcal{F}} J_i r_i.
\]
The updated parameters
\(\theta_{\text{fast}}=(H'+\lambda I)^{-1}G'\)
coincide with those obtained by full retraining on the retained set
(up to numerical precision).
Hence the method satisfies the formal definition of \emph{machine unlearning}
as the ability to expunge a subset's influence from the trained model
without re-accessing the original data, \cite{guo2019certified, bourtoule2021machine}.
\end{remark}

We have several remarks that we believe is timely. 
While the proposed unlearning operators reproduce the Gauss--Newton update on the retained dataset, occasional increases in performance metrics (RMSE, in our case) may still be observed when unlearning is applied to very small subsets of option quotes. 
Two conceptually distinct mechanisms may explain this behavior.
A single Gauss--Newton step is locally exact only in a neighborhood of a previously converged solution. 
When the forget set is small, the displacement of the optimum is also small,  and one correction typically recovers the new minimizer to machine precision. 
However, when the retained dataset becomes extremely small, the linearization may no longer be valid, and the residuals may appear unstable.
%
Another one is, independently of the Gauss--Newton linearization error, numerical instability 
could arise if the curvature matrix $H'$ becomes ill-conditioned after forgetting\footnote{This could especially amplify if you remove a few influential quotes (e.g. some maturities or deep-OTMs that strongly shape volatility) which leaves the new curvature matrix with extremely small eigenvalues (large condition number) and perhaps with directions in parameter space that are almost unconstrained by the retained data.}. 
This occurs when the retained quotes provide insufficient informational content, leading to exceedingly small eigenvalues and nearly unconstrained parameter directions. 
In such cases, even correct cached gradients could produce large parameter  excursions and elevated RMSE. 
Such optimization refinements fall outside the scope of our unlearning study as it is sampling related rather than the methodology.
In summary, the occasional RMSE deviations observed for extremely small retained 
datasets arise from the standard numerical behavior of Gauss--Newton and Levenberg--Marquardt schemes, rather than from the unlearning operators themselves, \cite{ait1998nonparametric}.

Before closing the section, we include another proposition on the computational complexity of the calibration and both unlearning operators. 
We defer this result to the end of the section so as not to interrupt the flow of the theoretical development in the previous subsections as the developed approaches are not related to the Heston model that we use for exemplary purposes. 
Given  our framework is fully model-agnostic and applies to any differentiable option pricing map, the complexity statement becomes most transparent when expressed for the Heston model, where each price evaluation is performed via a Fourier--Simpson integral with $N_u$ frequency nodes. 
The following proposition therefore specializes the analysis to this setting, which is also the one used in our numerical experiments.

\begin{proposition}[Computational complexity of calibration and unlearning operators]
	\label{prop:complexity}
	Let $N$ denote the number of option quotes, $N_u$ the number of Fourier--Simpson integration nodes used in the Heston pricer, and $p$ the dimension of the parameter vector (e.g.\ $p=5$ for the Heston model). Assume $p$ is fixed and small. Then, under a Gauss--Newton calibration scheme at a fixed reference $\theta^{\mathrm{ref}}$, the following complexity bounds hold:
	\begin{enumerate}[(i)]
		\item \textbf{Full recalibration.} A single Gauss--Newton iteration on a dataset of size $N$ has time complexity
		\[
		T_{\mathrm{retrain}} = \mathcal{O}(N\,N_u) + \mathcal{O}(p^3),
		\]
		where the dominant cost is the evaluation of $N$ Heston prices via Fourier--Simpson quadrature. The $\mathcal{O}(p^3)$ term arises from assembling and solving the $p\times p$ normal equations.
		
		\item \textbf{Sharded recomputation.} Let $D = \bigcup_{k=1}^K D_k$ be a partition of the data into $K$ shards and let $K(F) \subseteq \{1,\dots,K\}$ denote the set of shards affected by a forget set $F \subset D$. Denote by $N_{\mathrm{eff}}(F)$ the number of quotes in $\bigcup_{k\in K(F)} D_k$. Then a sharded recomputation step has time complexity
		\[
		T_{\mathrm{recomp}}(F) = \mathcal{O}\big(N_{\mathrm{eff}}(F)\,N_u\big) + \mathcal{O}(p^3),
		\]
		i.e.\ it is equivalent to a full Gauss--Newton step restricted to the affected shards. In the worst case $N_{\mathrm{eff}}(F) \approx N$, and $T_{\mathrm{recomp}}(F)$ degenerates to $T_{\mathrm{retrain}}$.
		
		\item \textbf{Fast refactorization.} Suppose that, at $\theta^{\mathrm{ref}}$, per-quote Gauss--Newton statistics
		\[
		u_i = J_i(\theta^{\mathrm{ref}})^\top r_i(\theta^{\mathrm{ref}}), 
		\qquad
		\psi_i = J_i(\theta^{\mathrm{ref}})^\top J_i(\theta^{\mathrm{ref}})
		\]
		and the global aggregates $H = \sum_{i} \psi_i$, $G = \sum_{i} u_i$ have been cached. Then a fast refactorization unlearning request for a forget set $F$ can be carried out in
		\[
		T_{\mathrm{fast}}(F) = \mathcal{O}(|F|\,p^2) + \mathcal{O}(p^3)
		\]
		time, corresponding to subtracting $\{\psi_i,u_i\}_{i\in F}$ from $(H,G)$ and solving the refactored $p\times p$ linear system. For fixed $p$, this is $\mathcal{O}(|F|) + \mathcal{O}(1)$, independent of $N$ and $N_u$.
	\end{enumerate}
	In particular, for fixed $p$, both full recalibration and sharded recomputation scale linearly in $N$ and approximately linearly in $N_u$, i.e.\ $\mathcal{O}(N\,N_u)$ in the dominant term, whereas the fast refactorization operator has per-request complexity independent of $N$ and $N_u$ and grows only with the size of the forget set $F$.
\end{proposition}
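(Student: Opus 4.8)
The plan is to treat this statement as an operation-counting argument, establishing each of the three bounds by decomposing the corresponding procedure into its atomic costs: (a) \emph{pricing}, i.e.\ evaluating $m(x_i;\theta^{\mathrm{ref}})$ and the residual $r_i$; (b) \emph{differentiation}, i.e.\ forming the Jacobian $J_i(\theta^{\mathrm{ref}})$; (c) \emph{aggregation}, i.e.\ accumulating the $p\times p$ contributions $\psi_i=J_i^\top J_i$ and the $p$-vectors $u_i=J_i^\top r_i$; and (d) the \emph{linear solve} of the $p\times p$ normal system. Correctness of the sharded and refactored procedures need not be re-derived here, since it is already guaranteed by Proposition~\ref{prop:1} and Proposition~\ref{prop:exact2}; the present proposition concerns only the cost of carrying them out.

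For part (i), I would first argue that a single Heston price via the Fourier--Simpson quadrature in \eqref{eq:heston-pj} costs $\mathcal{O}(N_u)$, since it sums $N_u$ evaluations of an integrand built from the characteristic function. The Jacobian $J_i$ consists of $p$ parameter sensitivities, each obtained by differentiating the same quadrature; because the derivative of the integrand is evaluated at the identical $N_u$ nodes, the per-quote differentiation cost is $\mathcal{O}(p\,N_u)$, i.e.\ $\mathcal{O}(N_u)$ for fixed $p$. Summing over the $N$ quotes gives $\mathcal{O}(N\,N_u)$ for steps (a)--(b). Assembling $H$ and $G$ costs $\mathcal{O}(N\,p^2)$, absorbed into $\mathcal{O}(N\,N_u)$ for fixed $p$, and the Cholesky solve of the $p\times p$ system contributes the additive $\mathcal{O}(p^3)$, yielding $T_{\mathrm{retrain}}=\mathcal{O}(N\,N_u)+\mathcal{O}(p^3)$.

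For part (ii), I would invoke Proposition~\ref{prop:1}: unaffected shards retain their cached $(H_k,G_k)$ at no pricing cost, so the only quotes that must be repriced and re-differentiated are those in $\bigcup_{k\in K(F)}D_k$, numbering $N_{\mathrm{eff}}(F)$. Applying the per-quote cost from part (i) to exactly these quotes gives $\mathcal{O}(N_{\mathrm{eff}}(F)\,N_u)$; summing the at most $K$ shard matrices and solving the reassembled system adds $\mathcal{O}(p^3)$ for fixed $p$, and the degenerate case $N_{\mathrm{eff}}(F)\approx N$ recovers $T_{\mathrm{retrain}}$. For part (iii), I would use Proposition~\ref{prop:exact2}: once $(\psi_i,u_i)$ and the global $(H,G)$ are cached, forming $H'=H-\sum_{i\in F}\psi_i$ and $G'=G-\sum_{i\in F}u_i$ requires $|F|$ subtractions of $p\times p$ and $p$-vector objects, i.e.\ $\mathcal{O}(|F|\,p^2)$, with no pricing or differentiation at all; the refactored solve again costs $\mathcal{O}(p^3)$, giving $T_{\mathrm{fast}}(F)=\mathcal{O}(|F|\,p^2)+\mathcal{O}(p^3)$, which is $\mathcal{O}(|F|)+\mathcal{O}(1)$ for fixed $p$ and manifestly independent of $N$ and $N_u$.

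The main obstacle is the accounting in part (i) for the differentiation cost: one must justify that forming $J_i$ does not inflate the per-node work, i.e.\ that each of the $p$ sensitivities reuses the same $N_u$ quadrature nodes and characteristic-function values, so that differentiation is a bounded multiple of pricing rather than an extra factor of $N_u$. A careless bound could otherwise produce a spurious $\mathcal{O}(N\,N_u^2)$ term; handling it cleanly requires noting either closed-form derivatives of the integrand or an adjoint/automatic-differentiation argument that differentiates the quadrature at fixed cost per node. Once that per-quote cost is pinned down, parts (ii) and (iii) follow immediately by restricting the same count to $N_{\mathrm{eff}}(F)$ quotes and to pure algebraic subtraction, respectively.
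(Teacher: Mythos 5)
Your proposal is correct and follows essentially the same operation-counting decomposition as the paper's proof: $\mathcal{O}(N_u)$ per price via the Fourier--Simpson quadrature, constant-factor Jacobian overhead for fixed $p$, $\mathcal{O}(p^3)$ for the Cholesky solve, restriction of the count to $N_{\mathrm{eff}}(F)$ quotes in part (ii), and pure $\mathcal{O}(|F|\,p^2)$ algebraic downdating in part (iii). Your explicit justification that forming $J_i$ costs only $\mathcal{O}(p\,N_u)$ per quote is a welcome refinement of the paper's brief ``constant factor overhead'' remark, and it is consistent with the paper's actual implementation (central finite differences, i.e.\ $2p$ extra price evaluations per quote), so no gap remains.
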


\begin{proof}
	For each quote $i$, evaluation of the Heston price $m(x_i;\theta)$ via Fourier--Simpson quadrature requires $\mathcal{O}(N_u)$ floating point operations, as the characteristic function $\varphi(u;\theta)$ and the integrand are evaluated at $N_u$ frequency nodes and combined by a weighted summation. Thus, pricing all $N$ quotes at a fixed parameter vector costs $\mathcal{O}(N\,N_u)$ operations. The Gauss--Newton step additionally forms residuals $r_i$ and Jacobians $J_i$, and accumulates
	\[
	H = \sum_{i=1}^N J_i^\top J_i, 
	\qquad
	G = \sum_{i=1}^N J_i^\top r_i,
	\]
	which require at most a constant factor overhead per quote when $p$ is fixed. Solving the normal equations $(H+\lambda I)\Delta\theta = G$ by, e.g., Cholesky factorization, has cost $\mathcal{O}(p^3)$. This proves (i).
	
	For sharded recomputation, only quotes in shards $k\in K(F)$ are repriced and their Jacobians recomputed, while unaffected shards reuse their cached $(H_k,G_k)$. If $N_{\mathrm{eff}}(F)$ denotes the total number of quotes in the affected shards, then the cost of recomputing their contributions is $\mathcal{O}(N_{\mathrm{eff}}(F)\,N_u)$, followed by the same $\mathcal{O}(p^3)$ solve on the updated global system. In the worst case, if $K(F)=\{1,\dots,K\}$, then $N_{\mathrm{eff}}(F)\approx N$ and the complexity coincides with full recalibration, establishing (ii).
	
	For fast refactorization, no repricing or Jacobian evaluation is performed once the cache is built. Each forgotten quote $i\in F$ contributes a rank-one downdate\footnote{Each forgotten quote $i$ contributes a small $p\times p$ matrix 
		$\psi_i = J_i^\top J_i$ and vector $u_i = J_i^\top r_i$ to the downdate 
		of $(H,G)$.
	} $\psi_i \in \mathbb{R}^{p\times p}$ and a vector downdate $u_i \in \mathbb{R}^p$ to $(H,G)$. Updating
	\[
	H' = H - \sum_{i\in F} \psi_i, 
	\qquad
	G' = G - \sum_{i\in F} u_i
	\]
	requires $\mathcal{O}(|F|\,p^2)$ operations. A fresh Cholesky factorization of $H'$ and back-substitution then cost $\mathcal{O}(p^3)$. Since $p$ is fixed and small, these costs are independent of $N$ and $N_u$, and the total complexity is $\mathcal{O}(|F|\,p^2) + \mathcal{O}(p^3)$, proving (iii).
\end{proof}

We finally note that our exactness statements are always with respect to the Gauss--Newton linearization at a reference point, not claiming global equivalence of fully iterated nonlinear solvers. 
In practice, however, calibration is often near an optimum and a single GN/LM step is used as a local adjustment, which is precisely the regime our operators are designed for.

\section{Illustrations}

First\footnote{All scripts are in Python and all associated numerical illustrations presented in this manuscript are	carried out on a system with i7 Core with 2.20 GHz and 16 GB RAM.}, we remark that in Figures of this section, unless stated otherwise, report the median over 3-5 random unlearning realizations per fraction, ensuring robustness to randomness in forgotten subsets.
We synthetically generate a surface of European call prices under the Heston model with known ground-truth parameters such  as $ \theta_{\mathrm{true}} = (\kappa,\theta_v,\sigma_v,\rho,v_0)= (2.0,\,0.06,\,0.30,\,-0.6,\,0.06), $ with $ r=0.01, S_0=100.$
We generate a path of either 90 trading days for a small sample experiment  or 180 trading days for a large sample experiment (Euler--Maruyama with correlated Brownian shocks; with $\Delta t=1/252$ throughout this section).
For each day in the path we form European call quotes at maturities  $T\in\{30,60\}$ days (i.e., $\{30,60\}/252$ years) and strikes  \(X\in\{90,100,110\}\) for small sample experiment and  maturities $T\in\{30,60,90\}$ days and strikes $X\in\{80,90,100,110,120\}$ for large sample experiment. 
Option prices \(m(x_i;\theta)\) are computed with the semi-analytic Heston formula  via Fourier inversion and Simpson's rule with either \(U_{\max}=50\) and \(N_u=180\) nodes or \(U_{\max}=120,\;N_u=800\) again depending on the sample size.
To emulate measurement noise, we perturb each price either by \(\varepsilon_i\sim\mathcal{N}(0,\sigma^2)\) with \(\sigma=10^{-3}\) or  \(\varepsilon_i\sim\mathcal{N}(0,(5\times10^{-4})^2)\) for the small and large sample size, respectively.
We remark that the total number of quotes, $N$, depend  on the path horizon and coverage and is thus not fixed a priori.  
We partition the quotes by calendar time into contiguous shards of  either 10 or 30 days, similarly.
 %
%
Starting from  $\theta^{\text{ref}}=(1.0,\,0.04,\,0.20,\,-0.3,\,0.04)$, we run a short Levenberg--Marquardt loop (Gauss--Newton with adaptive damping)  to obtain $\theta^\star$.
At \(\theta^\star\), we compute central finite-difference Jacobians $J_i=\nabla_\theta m(x_i;\theta^\star)$ and residuals \(r_i=y_i-m(x_i;\theta^\star)\).
We cache, for each quote $i$, $u_i, \psi_i \in \mathbb{R}^5 $ and the global normal equations, $(H, G)$ together with a small Tikhonov term \(\lambda=10^{-6}\) (i.e., we solve with \(H+\lambda I\)).
We also store per-shard, $(H_k, G_k)$ given a forget set \(\mathcal{F}\subset\{1,\dots,N\}\).

Speaking of the  cache, we remind that it is not a replica of the training data.
It stores only derivative-based sufficient statistics that summarize the model's local curvature at the calibration optimum.
Forgetting operates by algebraic removal of those statistics associated with the forgotten samples, which is exactly what the unlearning literature defines as \emph{data deletion at the parameter level}.
No raw strikes, maturities, or prices are revisited once the cache is built.

Our baseline, as we mentioned before, is the recalibration, full retraining in which we  recompute all Jacobians and residuals on the retained subset from the raw quotes, thereby rebuilding the normal equations $(H', G')$
and take one Gauss--Newton step to obtain \(\theta_{\text{ret}} = (\theta^{\text{ref}} + (H'+\lambda I)^{-1}G')\).
This represents a full recalibration from scratch and serves as the ground--truth
baseline for evaluating the two unlearning operators.

All timings are on a single laptop core (NumPy/BLAS pinned to one thread).
Because each calibration step involves only a few thousand Heston price evaluations and a single $5\times5$ linear solve, wall-clock runtimes are on the order of seconds even in the full configuration ($U_{\max}=120,\,N_u=800$).
Subsequent unlearning operations reuse cached Jacobians and require no re-pricing, yielding sub-second updates.
This behavior is consistent with the $\mathcal{O}(p^3)$ cost of the
Gauss--Newton linear system and the modest number of Fourier nodes per price evaluation.
More intuitively speaking, it is as we are not performing a full market-scale optimization but a one-step linearized Gauss--Newton update on a small synthetic grid.
We also stress that every plotted point is a median across independent random forget sets so that the figures already represents typical behavior, not a single lucky case.

We start by showing equivalence between retraining and our proposed machine unlearning approach, the fast factorization in this case.
By equivalence, we mean that up to machine precision (mostly on the order of $10^{-13}$); both approaches provide  identical results.
We observe an exemplary comparison in Figure~\ref{plot:precisions1} with the Heston variables separately shown, and the $y$-axis shows the difference of $\theta_{\text{fast}} -\theta_{\text{retrain}}$.
We remark again that we reserve $\theta_v$ for a parameter of the Heston model, and $\theta$ for the parameter space of the Heston model.
Subfigure~\ref{fig:prec1} shows per-parameter distributions of the same differences across all runs. The difference distribution of each parameter collapses around zero, and even that $\kappa$ has broader variance it still is on the order of $10^{-13}$; meaning purely numerical floating-point variation.   
Another note is that the \emph{vase} in $\kappa$ shows slightly more spread around zero, but still zero bias.

\begin{figure}[H]
	\begin{subfigure}{0.5\textwidth}
		\centering
		\includegraphics[ width=1\linewidth]{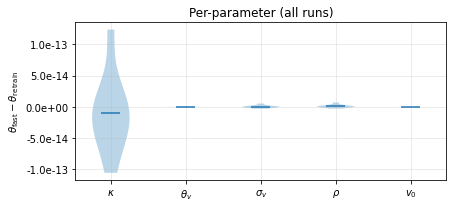}
		\caption{Per-parameter (all runs) }
		\label{fig:prec1}
	\end{subfigure}
	\begin{subfigure}{0.5\textwidth}
		\centering
		\includegraphics[width=1\linewidth]{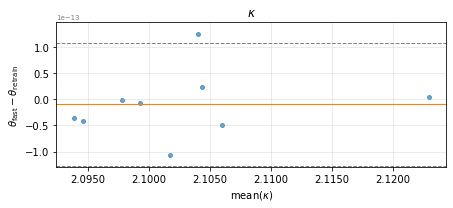}
		\caption{$\kappa$ }
		\label{fig:prec2}
	\end{subfigure}
	\begin{subfigure}{0.5\textwidth}
		\centering
		\includegraphics[width=1\linewidth]{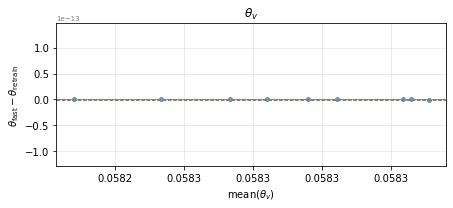}
		\caption{ $\theta_v$}
		\label{fig:prec3}
	\end{subfigure}
	\begin{subfigure}{0.5\textwidth}
		\centering
		\includegraphics[width=1\linewidth]{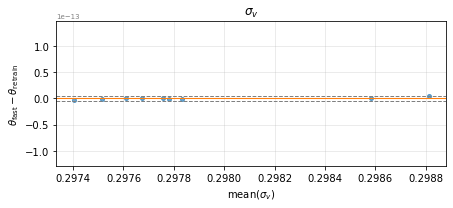}
		\caption{ $\sigma_v$}
		\label{fig:prec4}
	\end{subfigure}
	\begin{subfigure}{0.5\textwidth}
		\centering
		\includegraphics[width=1\linewidth]{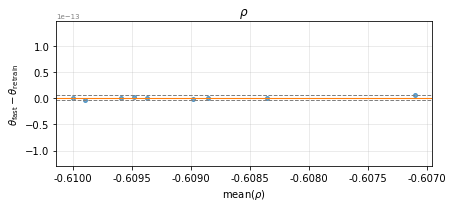}
		\caption{ $\rho$}
		\label{fig:prec5}
	\end{subfigure}
	\begin{subfigure}{0.5\textwidth}
		\centering
		\includegraphics[width=1\linewidth]{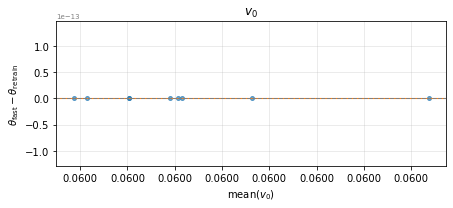}
		\caption{ $v_0$ }
		\label{fig:prec6}
	\end{subfigure}
	\caption{An exemplary comparison of equivalence of retraining and fast factorization with a smaller sample  }
	\label{plot:precisions1}
\end{figure}

Subfigures~\ref{fig:prec2}--\ref{fig:prec6} compare parameter wise, while $x$-axis indicates the mean of the two estimates (of retraining and fast factorization) for each parameter across runs (i.e., trials or experiments). 
We remark that the vertical scale is around $10^{-13}$, and expectedly $\kappa$ has higher variation due to pure round-off noise (i.e., the mean reversion speed is the most sensitive numerically).
We therefore conclude that the recalibration of the fast factorization is statistically indistinguishable from retraining fully, indicating perfect numerical agreement and no systematic bias across the parameters.
While we are aware of the fact that we provide an exemplary comparison, we remark that in several hundreds of trials based on different sources of randomness; we failed to see different behavior than that of in Figure~\ref{plot:precisions1}\footnote{We present both estimates with the same color dots given that the dispersion is almost non-existent under the equivalence of machine precision.}.
Unlike the results presented in Figure~\ref{plot:precisions1}--\ref{plot:precisions2} describes an exemplary comparison of all parameters and $\kappa$ for a larger sample.
Especially, in Subfigure~\ref{fig:precb1}, we see more regular behavior due to the possible reasons we discussed in an earlier section,~\ref{sec:fast-refactor}.
Even though we, for the time being, exclude the sharded recomputation for brevity in Figures~\ref{plot:precisions1}--\ref{plot:precisions2}; similar visualizations could be constructed on the exactness of the sharded recomputation.
We remark, however, that there is no difference between all three approaches up to machine precision, i.e., $\|\theta_{\text{fast}}-\theta_{\text{retrain}}\|_2 < 10^{-8}  $ and $\|\theta_{\text{recomp}}-\theta_{\text{retrain}}\|_2 < 10^{-8} $  in all instances we observed in large sample experiments along with.
Another reason for the exclusion of the sharded computation is timely, as it is highly sensitive to the forgetting set which we now discuss in Figure~\ref{plot:deg1}.

\begin{figure}[H]
	\begin{subfigure}{0.5\textwidth}
		\centering
		\includegraphics[ width=1\linewidth]{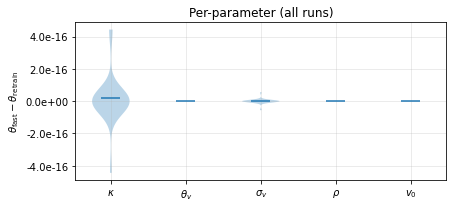}
		\caption{ Per parameter (all runs) }
		\label{fig:precb1}
	\end{subfigure}
	\begin{subfigure}{0.5\textwidth}
		\centering
		\includegraphics[width=1\linewidth]{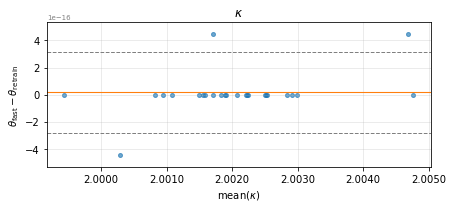}
		\caption{$\kappa$}
		\label{fig:precb2}
	\end{subfigure}
	\caption{An exemplary comparison of equivalence of retraining and fast factorization with a larger sample  }
	\label{plot:precisions2}
\end{figure}

Full recalibration scales  with the data set size since each quote requires multiple finite-difference Heston evaluations.
In contrast, the proposed fast refactor requires no re-pricing and runs in sub-millisecond time, as it merely updates the cached curvature system, unlike the sharded recomputation approach for reason we discuss now.
Recall that $F$ refers to the forget set, i.e. the subset of quotes the user asked to \emph{unlearn}, $\mathcal{K}(F)$ refers to the set of affected shards with $K$ being the total number of shards.
The sharded recomputation should be faster than retraining only if the number of affected shard is smaller than number of shards, $\mid \mathcal{K}(F ) \mid \ll K $. 
Therefore, two cases become interesting that either there exists a small number of shards, or with higher likelihood that the forget set is spread \emph{roughly uniformly} across all shards.
The second case suggests that almost every shard is affected.
In small experiments (few data per shard), shard recomputation skips a few shards and is faster.
In full-scale runs, when almost every shard contains forgotten quotes, recomputation is likely to degenerates to full retraining.
The recomputation cost approaches full retraining when the forget set is evenly distributed across shards. 
Only the fast refactorization operator retains sub-millisecond latency regardless of shard coverage.

\begin{figure}[H]
	\begin{subfigure}{0.5\textwidth}
		\centering
		\includegraphics[ width=1\linewidth]{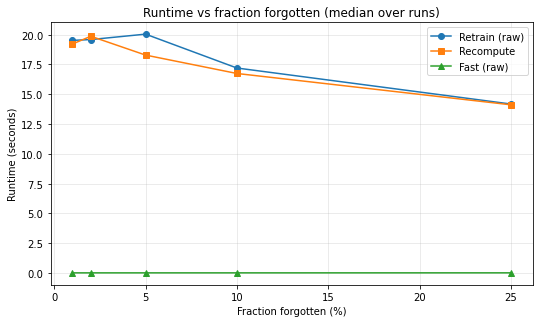}
		\caption{Larger-sample configuration }
		\label{fig:recompequalalmost1seed42}
	\end{subfigure}
	\begin{subfigure}{0.5\textwidth}
		\centering
		\includegraphics[width=1\linewidth]{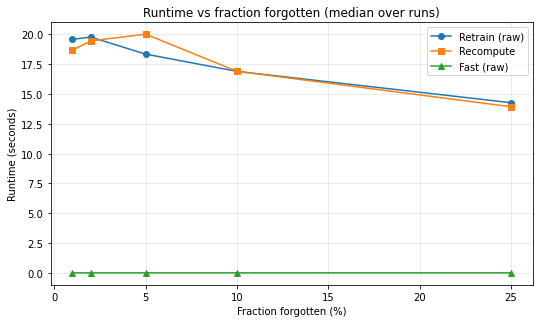}
		\caption{Larger-sample configuration}
		\label{fig:recompequalalmost1seed60}
	\end{subfigure}
	\begin{subfigure}{0.5\textwidth}
		\centering
		\includegraphics[width=1\linewidth]{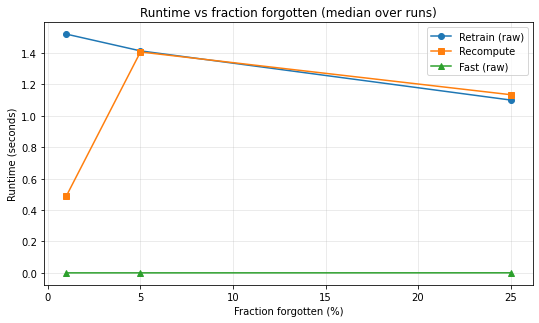}
		\caption{Smaller-sample configuration}
		\label{fig:recompequalalmost1seed42_d}
	\end{subfigure}
	\begin{subfigure}{0.5\textwidth}
		\centering
		\includegraphics[width=1\linewidth]{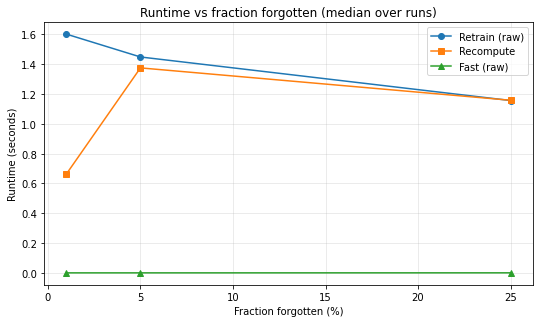}
		\caption{Smaller-sample configuration}
		\label{fig:recompequalalmost1seed60_d}
	\end{subfigure}
	\caption{Example on the importance of the forgetting set for the sharded recomputation}
	\label{plot:deg1}
\end{figure}
We show this aspect in Figure~\ref{plot:deg1}, in which Subfigures~\ref{fig:recompequalalmost1seed42}--\ref{fig:recompequalalmost1seed60} are based on a larger sample with the fixed random forgetting but different sources of randomness in the underlying paths leading to different quotes, the ones on the lower panel are from a smaller sample with the same characteristics; hence much lower computational time given in Subfigures~\ref{fig:recompequalalmost1seed42}--\ref{fig:recompequalalmost1seed60}.
In each example observable in Figure~\ref{plot:deg1}, the sharded recomputation tends to be in a co-movement with the retraining fully, especially after higher percentage of unlearned quotes.  
So that in case of worst-case dispersion of deletions, i.e. too many shards being affected, the sharded recomputation is no longer cheap in computational cost.

\begin{figure}[H]
	\begin{subfigure}{0.5\textwidth}
		\centering
		\includegraphics[ width=1\linewidth]{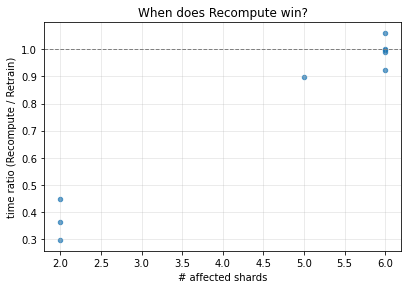}
		\caption{Effect of   earlier shards }
		\label{fig:when1}
	\end{subfigure}
	\begin{subfigure}{0.5\textwidth}
		\centering
		\includegraphics[width=1\linewidth]{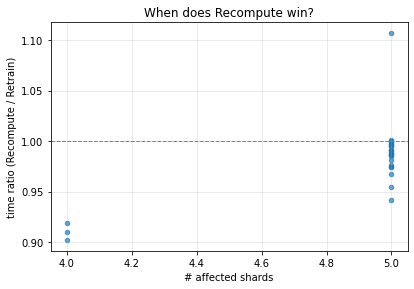}
		\caption{Effect of no earlier shards }
		\label{fig:when2}
	\end{subfigure}
	\caption{Exemplary demonstration of the sharding positions in sharded recomputation   }
	\label{plot:whens}
\end{figure}

In Figure~\ref{plot:whens}, we demonstrate different forgetting parts and its effect on the relative computation time of the sharded recomputation by retraining. 
Time ration on the $y$-axis shows that if the time ratio is lower than one, the sharded recomputation is faster than retraining on the retained data.
By construction, $x$-axis shows how many shards contain at least one forgotten quote.
While the second shard affected in Subfigure~\ref{fig:when1}, Subfigure~\ref{fig:when2} illustrates the first three shard is not affected.
If we were to decide not to unlearn some of the quotes in the experiments, this could overall be efficient in the first example, and with likely none to little effect on the second example in Subfigure~\ref{fig:when2} as the current set-up is saving so little time given the affected shards.
Therefore, we stress strongly that the sharded recomputation method's efficiency is governed by the locality of the forgotten data.

Across multiple independent runs, the relative parameter error consistently remained at machine precision, confirming the generality of the result.
The surface is fixed intentionally to ensure that performance differences stem solely from the unlearning mechanism, not from new random draws. 
Additional seeds produced qualitatively identical behavior
The reported numbers in Tables~\ref{tab:benchmark1} and~\ref{tab:benchmark2} are medians across 10 runs and variability was negligible (IQR\footnote{IQR stands for interquartile range.} below 1$\%$).
While $s$ refers to seconds, $ms$ refers to milliseconds ($10^{-3}s$) and $\mu s $ refers to microseconds ($10^{-6}s$).

\begin{table}[h]
\centering
 \caption{%
	Benchmark results across forgetting fractions.
	Median runtimes   and parameter deviations are reported.
	The fast-refactor approach achieves identical accuracy to full retraining
	with several orders of magnitude speedup.
}
 \begin{tabular}{rccccc}
 	\hline
 	\textbf{F(\%)} &
 	\textbf{Retrain} &
 	\textbf{Recompute} &
 	\textbf{Fast} &
 	\textbf{RMSE kept (fast/retr)} &
 	\textbf{Speedup} \\
 	\hline
 	1$\%$  & 21.06 s & 20.44 s & 174.2  $\mu s$ & 0.00049 / 0.00049 & $\times$120,886.6 \\
 	2$\%$  & 18.36 s & 18.05 s & 242.8 $\mu s$ & 0.00049 / 0.00049 & $\times$75,639.3 \\
 	5$\%$  & 17.71 s & 17.53 s & 436.9  $\mu s$ & 0.00049 / 0.00049 &$\times$40,414.3 \\
 	10$\%$ & 16.79 s & 16.61 s & 882.0 $\mu s$ & 0.00049 / 0.00049 & $\times$19,268.7 \\
 	25$\%$ & 14.31 s & 13.93 s & 2.12  $m s$  & 0.00050 / 0.00050 & $\times$6,565.9 \\
 	\hline
 \end{tabular}

 \label{tab:benchmark1}
\end{table}

Although in Tables~\ref{tab:benchmark1}--\ref{tab:benchmark2}, we employ different underlying paths so that different quotes could be generated, yet the forgetting sets are fixed across runs.
The first columns refer the fraction of data forgotten, and throughout the study we never control the forgetting set; we simply randomize it.
Reported values show that, on general, even in the worst case scenario the fast factorization speeds up the calibration by six thousand times, roughly four to five orders of magnitude speedup remarking strong evidence of the efficiency of the unlearning operator.
We remark that the speedup is measured as the ratio of median computation time taken via retraining by the median computation time taken via the fast factorization operator. 
The fifth columns in Tables~\ref{tab:benchmark1}--\ref{tab:benchmark2}  measure the validation error on retained quotes, 
 

\begin{table}[http]
	\centering
	 \caption{%
		Benchmark results across forgetting fractions.
		Median runtimes and parameter deviations are reported.
		The fast-refactor approach achieves identical accuracy to full retraining
		with several orders of magnitude speedup.
	}
\begin{tabular}{rccccc}
	\hline
	\textbf{F(\%)} &
	\textbf{Retrain} &
	\textbf{Recompute} &
	\textbf{Fast} &
	\textbf{RMSE kept (fast/retr)} &
	\textbf{Speedup} \\
	\hline
	1$\%$  & 18.16 s & 17.63 s & 154.0 $\mu s$ & 0.00049 / 0.00049 & $\times$120,110.6 \\
	2$\%$  & 18.01 s & 17.42 s & 230.0 $\mu s$ & 0.00049 / 0.00049 & $\times$78,171.0 \\
	5$\%$  & 17.13 s & 17.08 s & 444.2 $\mu s$ & 0.00049 / 0.00049 & $\times$38,550.8 \\
	10$\%$ & 16.19 s & 16.17 s & 816.2 $\mu s$ & 0.00049 / 0.00049 & $\times$20,276.0 \\
	25$\%$ & 13.39 s & 13.37 s & 1.90 ms  & 0.00050 / 0.00050 & $\times$7,059.7 \\
	\hline
\end{tabular}

\label{tab:benchmark2}
\end{table}

Tables~\ref{tab:benchmark1}--\ref{tab:benchmark2} illustrate  retraining time does not scale perfectly linearly (since GN step cost flattens with fewer quotes).
Therefore, denominator decreases slightly faster than numerator.
We also observe that the effective speedup decreases monotonically with the forgotten fraction since a larger fraction of the cached structure must be updated or recomputed.
This scaling is consistent with the theoretical expectation that the cost advantage of refactorization diminishes as the retained set shrinks.
All timings were measured as median wall-clock durations over 10 runs using identical random seeds and quote subsets.
We now provide visual presentation of the experiments reported in Tables~\ref{tab:benchmark1}--\ref{tab:benchmark2}.


\begin{figure}[H]
	\begin{subfigure}{0.5\textwidth}
		\centering
		\includegraphics[ width=1\linewidth]{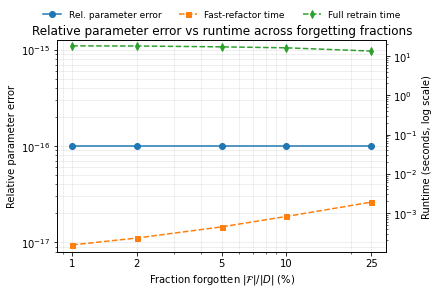}
		\caption{Typical behavior }
		\label{fig:rel1}
	\end{subfigure}
	\begin{subfigure}{0.5\textwidth}
		\centering
		\includegraphics[width=1\linewidth]{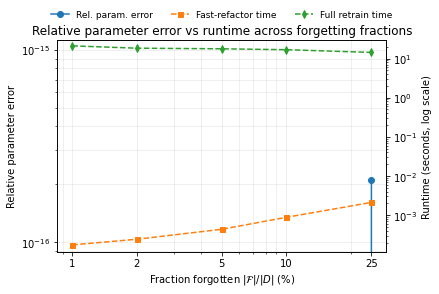}
		\caption{Slight differencing at machine precision }
		\label{fig:rel2}
	\end{subfigure}
	\caption{Exemplary comparison of relative parameter error and runtime across forgetting fractions  }
	\label{plot:rels}
\end{figure}

Figure~\ref{plot:rels} demonstrates relative parameter error against runtime across various forgetting fractions.
In Subfigures~\ref{fig:rel1}--\ref{fig:rel2}, the deviation is calculated by $\|\theta_{fast}-\theta_{retrain}\|_2/\|\theta_{retrain}\|_2$ given $D\setminus F$.
In Subfigure~\ref{fig:rel1}, we observe a perfectly flat blue line, all around $10^{-15}$, whereas Subfigure~\ref{fig:rel2} demonstrates that the blue line jumps upward near $25 \%$ around  $10^{-12}$.
That single-point rise indicates that, for one run at the highest forgetting fraction fast and retrain parameters differed slightly more, probably due to numerical conditioning or cache subtraction noise.
Although this is evident that the unlearning operator behaves stably as relative parameter deviation remains near machine precision; an exemplary case such as this requires several remarks.
Higher percentage removals are actually a loss of too many informative points on the surface, carrying the risk of $H'$ becoming poorly conditioned and amplifying small floating-point noise in $G'$.
Even with that, we remark that  fast refactor method reproduces the retraining solution to floating-point accuracy.

A crucial remark is that we empirically observe that the positive-definiteness condition  $\lambda_{\min}(H')>0$ remains satisfied well beyond typical forgetting levels.
In particular, the theoretical bound, $\|\sum_{i\in F}\psi_i\|_2<\lambda_{\min}(H)$, ensures that $H'$ remains positive definite, is rarely active until more than approximately $70\%$ of quotes are removed.
This indicates a strong numerical robustness of the downdate procedure and supports the stability of the fast-refactor updates under realistic unlearning scenarios.
We deem a further examination of the eigenvalue structure of the curvature matrix $H'$   
after forgetting to be of deeper interest to optimization algorithms rather than to the methodological structure of our unlearning framework. 
Therefore, we keep this discussion brief and do not pursue an extensive numerical analysis  beyond stability verification.

Empirically, calibration runtimes scale linearly in the number of quotes $N$ 
and approximately quadratically in the number of Fourier--Simpson integration 
nodes $N_u$, in line with the overall $\mathcal{O}(N\,N_u)$ cost of evaluating 
the Heston pricing integral.
Since the parameter dimension $p$ is fixed and relatively small, the memory and computational cost of assembling and solving  the Gauss--Newton normal equations is negligible. 
The reported wall--clock times therefore match the analytic complexity of the pricing integral and the minimal  number of nonlinear iterations typically required for Heston calibration.

The goal of our benchmark is not large--scale industrial calibration but a  controlled and reproducible comparison of unlearning operators.
In this setting, second-level runtimes are representative and analytically consistent. The key 
question is whether the proposed operators reproduce the calibration update on 
the reduced dataset. Both unlearning operators achieve numerical agreement up to 
machine precision, and the fast refactorization operator does so at a small 
fraction of the computational cost; see Proposition~\ref{prop:complexity}.

\section{Conclusion}

We have shown that the proposed unlearning operators admit rigorous guarantees; local exactness under fixed linearization and stability under curvature perturbations
Numerical experiments further confirm that the fast refactorization operator matches full retraining to floating-point precision and achieves several orders  of magnitude speedup, even for substantial forgetting fractions. 
Although algebraically simple, these operators rely on the observation that the Gauss--Newton normal equations encode a sufficient-statistics structure for nonlinear calibration.
Identifying the exact additive quantities whose removal preserves the optimality conditions under fixed linearization is, in our view, a nontrivial contribution and appears not to have been articulated previously in either the calibration or unlearning literature.

From a computational perspective, recalibration of the Heston model involves the evaluation of semi-analytic Fourier--Simpson integrals and accumulation of the $p\times p$ curvature matrix $H=\sum_i J_i^\top J_i$. 
Even in the full configuration ($U_{\max}=120$, $N_u=800$), this entails only $N\times N_u \approx 10^6$ function evaluations for datasets of typical size ($N \approx 10^3{-}10^4$).
Such runtimes are modest in isolation, but financial institutions routinely process thousands of option books or parameter updates per day.
Recomputing all normal equations after each deletion therefore becomes costly, whereas the proposed unlearning operators perform mathematically exact deletions using only cached curvature statistics. The contribution is thus not  raw speed alone, but the ability to \emph{delete data deterministically without 	retraining}, enabling reversible and auditable calibration updates at negligible incremental cost.

Beyond computational gains, the framework reframes recalibration as an additive-subtractive operator calculus, enabling principled deletion of corrupted, stale, or restricted data. 
This expands calibration from a purely forward-learning procedure into a bidirectional model-management process, useful for regulatory compliance, data-quality control, and influence diagnostics. 
Although it is not in our interest yet our framework could also be used to quantify the influence of the subsets of data.
Questions such as which period affecting the calibration most, or what happens in case of exclusion of a data source might also be asked.

We focus on one representative semi-analytic model to isolate the operator behavior; extension to other models and real data is left for future work.
Future work may explore whether the operator perspective developed here extends beyond the static Gauss--Newton setting.
One natural question is how  additive-subtractive updates interact with models that contain latent or filtered state variables, such as stochastic-volatility or regime-switching specifications, where forgetting would couple to the underlying filter-smoother structure. 
Another possible direction concerns the use of the resulting sufficient-statistics calculus for influence diagnostics, for example to quantify the leverage of particular maturities, regimes, or data sources on the calibrated parameters. 
The same viewpoint also suggests potential analogues for rolling-window or online calibration procedures, where the removal of stale information must be carried out without repeatedly rebuilding the normal equations from scratch.
Finally, a more ambitious line of inquiry is whether analogous operator rules exist for quasi-Newton or higher-order curvature representations.
Overall, these directions emphasize that machine unlearning should be regarded not only as a computational device, but as part of a broader operator-theoretic framework for interpretable, auditable, and dynamically maintainable calibration pipelines.

\backmatter


\section*{Compliance with Ethical Standards}


\bmhead*{Competing Interests}
 There are no financial or non-financial interests   directly or indirectly related to the work submitted for publication.
\bmhead*{Funding}
 There is no funding received in the making of this manuscript.




\bigskip

\begin{appendices}






\end{appendices}


\bibliography{MyBiblio}
\end{document}